\documentclass[journal]{IEEEtran}
\usepackage{amsmath,amsfonts,amssymb,amsthm}
\usepackage{algorithmic}
\usepackage{algorithm}
\usepackage{array}
\usepackage[caption=false,font=normalsize,labelfont=sf,textfont=sf]{subfig}
\usepackage{cite}
\usepackage{textcomp}
\usepackage{url}
\usepackage{verbatim}
\usepackage{graphicx}
\usepackage{amssymb} 
\usepackage{cite}
\usepackage{xcolor}  
\usepackage{booktabs}
\usepackage[hidelinks]{hyperref}

\theoremstyle{plain}
\newtheorem{theorem}{Theorem}
\newtheorem{lemma}{Lemma}

\newtheorem{corollary}{Corollary}

\theoremstyle{definition}
\newtheorem{definition}{Definition}

\theoremstyle{remark}
\newtheorem{remark}{Remark}

\theoremstyle{definition}

\begin{document}
\title{Equilibrium-Free Contraction Stability Analysis for Grid-Forming Converter-Based Microgrids}
\author{Shijie Peng,~\IEEEmembership{Graduate Student Member, IEEE}, 
        Xiuqiang He,~\IEEEmembership{Member, IEEE}, 
        Xi Ru, 
        Hua Geng~\IEEEmembership{Fellow, IEEE}}

\maketitle
\begin{abstract}
Renewable-driven microgrids dominated by grid-forming (GFM) converters are subject to persistent power fluctuations, 
making equilibrium-known stability assessments restrictive.
This paper develops an equilibrium-free contraction stability method based on semi-contraction theory. 
By formulating the system in a symmetry-aware projected state space, 
the intrinsic rotational mode induced by uniform angle shifts is removed. 
A blockwise Jacobian decomposition is introduced to characterize the coupled active and reactive power dynamics, 
yielding a computable regional contraction condition.
This condition is then converted into forward-invariant stability certificates that provide trajectory-level performance guarantees.
For autonomous operation without disturbances, the method provides an equilibrium-free nonlinear stability characterization 
together with an estimation of the region of attraction (ROA). 
For non-autonomous operation under disturbances, it derives explicit bounds for quasi-steady tracking under slowly varying injections 
and for robustness under fast or composite disturbances.
Case studies on a 9-bus system validate the proposed method.
\end{abstract}

\begin{IEEEkeywords}
Contraction analysis, equilibrium-free stability, microgrids, GFM converters, power fluctuations.
\end{IEEEkeywords}

\vspace{-5pt}
\section{Introduction}
\vspace{-3pt}

\IEEEPARstart{S}{tability} is fundamental to the reliable operation of modern power grids.
As renewable penetration increases and synchronous inertia declines, 
grid-forming (GFM) converters are becoming essential for providing autonomous voltage and frequency support in islanded microgrids~\cite{lin2020roadmap}.
However, these microgrids exhibit nonlinearly coupled dynamics arising from network power flows and converter-control interactions.
Persistent renewable and load fluctuations can continuously excite these dynamics, causing the operating condition to evolve over time
rather than settle to a stationary post-disturbance point~\cite{Mohandes2019review, Anvari_2016}.
Under such conditions, stability assessments anchored to a fixed equilibrium become restrictive.
This raises a central question: 
how can stability be characterized for GFM converter-based microgrids
under persistent fluctuations without relying on a precomputed equilibrium?

Classical stability analysis is commonly organized around an equilibrium-known viewpoint,
where the system behavior is evaluated with respect to a known operating equilibrium. 
Representative tools include Lyapunov direct methods and energy-function approaches~\cite{Kundur1994, Chiang2011}.
These methods provide powerful means for assessing stability 
without exhaustive time-domain simulations. 
However, in power grids and microgrids subject to persistent renewable and load variations,
repeatedly updating the reference equilibrium and reconstructing the associated stability assessment 
can be computationally burdensome 
and may not yield a practical certificate~\cite{Milano2018}.
Thus, these classical analysis methods are not fully suited to stability assessment under such conditions in modern power grids.


These limitations become more pronounced in GFM converter-based islanded microgrids.
Without a stiff grid reference, system stability naturally depends on relative rather than absolute phase angles, 
meaning that selecting one converter as the internal reference may introduce redundant artificial constraints~\cite{Schiffer2014DroopStability, Florian2012}. 
Meanwhile, lossy networks with high $R/X$ ratios and converter-control interactions couple angle and voltage dynamics, 
making conventional $P$--$\theta$/$Q$--$V$ decoupling analysis insufficient~\cite{2020microgrid, Vorobev2018}.
Together, these features make the stability analysis of GFM converter-based islanded microgrids more challenging.

Although equilibrium-free perspectives on stability are not entirely new, 
their systematic extension to power systems remains nontrivial~\cite{Willems1974, Hill2006}. 
Concepts such as partial and set stability can characterize relative-angle behavior and invariant-set convergence~\cite{vorotnikov2005partial,bhatia2002stability}, 
yet they rarely provide computable criteria in converter-based microgrid applications.
Recent work on augmented synchronization links synchronization behavior to convergence toward equilibrium sets, 
offering insight beyond classical analysis~\cite{yang2024augment}. 
Relatedly, Kuramoto-like synchronization methods provide useful relative-angle insights 
without explicit equilibrium computation~\cite{dorfler2013synchronization}. 
However, these largely algebraic approaches often rely on simplified oscillator models, 
limiting their applicability to microgrids with coupled angle-voltage dynamics~\cite{simpson2013synchronization,Zhu2018}. 
These observations motivate an equilibrium-free stability analysis method that accounts for relative-angle behavior 
and angle-voltage coupling while yielding analytically computable stability certificates.

In practice, stability is often judged by whether frequency and voltage deviations 
remain bounded and return to acceptable operating regions, rather than by convergence 
to a fixed equilibrium~\cite{yang2024augment}. 
This pragmatic viewpoint calls for a method that can address both autonomous and non-autonomous operation.
For autonomous operation without exogenous disturbances, 
the method should characterize regional stability and synchronization without prior equilibrium computation.
For non-autonomous operation subjected to disturbances, 
it should quantify bounded trajectory deviations caused by 
persistent power variations and fast perturbations.
To this end, this paper develops an equilibrium-free stability method rooted in contraction theory~\cite{LOHMILLER1998, FBbook}.
By focusing on relative trajectory behavior in a projected space, 
the proposed method accommodates operating-point drift while capturing coupled angle-voltage converter dynamics. 
It further yields computable forward-invariant stability certificates that cover both spontaneous synchronization in autonomous systems 
and disturbance-induced deviations in non-autonomous systems.

The main contributions are summarized as follows:
\begin{enumerate}
    \item An equilibrium-free stability analysis method is developed for GFM converter-based microgrids, 
    where stability under renewable-driven operating-point drift is characterized by trajectory contraction 
    rather than convergence to a precomputed equilibrium.

    \item A computable contraction condition is derived in a projected transverse subspace, 
    where the neutral rotational mode is removed while the coupled angle-voltage dynamics are retained.

    \item Forward-invariant stability certificates are further constructed. 
    For autonomous operation, the resulting 
    invariant sets extend the classical ROA concept to an equilibrium-free setting. 
    For non-autonomous operation, the resulting state-space tubes provide explicit tracking and 
    robustness guarantees under slowly varying injections and fast perturbations.

\end{enumerate}

The remainder of this paper is organized as follows.
Section~II presents the preliminaries and problem formulation. 
Section~III develops the equilibrium-free contraction analysis method and the main stability condition.
Section~IV constructs stability assessment and dynamic performance certification for autonomous and non-autonomous operation.
Section~V validates the proposed method through case studies.
Finally, Section~VI concludes the paper.

\textit{Notation:}
$\mathbb{R}$ and $\mathbb{R}_{+}$ denote the sets of real and positive real numbers, respectively. 
$\mathbf{1}_n$ denotes the $n$-dimensional all-one vector. 
$\operatorname{diag}(x)$ denotes the diagonal matrix with entries $x$, and $\operatorname{blkdiag}(A,B)$ denotes the block-diagonal matrix with diagonal blocks $A$ and $B$. 
For a symmetric matrix $A$, $\lambda_{\min}(A)$ and $\lambda_{\max}(A)$ denote its minimum and maximum eigenvalues, respectively. 
The notation $\|\cdot\|_2$ denotes the Euclidean norm for vectors and the induced $2$-norm for matrices.

\vspace{-6pt}
\section{Preliminaries and Formulations}
\vspace{-1pt}

\subsection{System Modeling}

We consider a microgrid modeled as a graph $\mathcal{G}=(\mathcal{V},\mathcal{E})$ 
after Kron reduction of passive loads~\cite{dorfler2012kron}, where 
$\mathcal{V} = \{1,2,\dots,N\}$ denotes the set of converter-interfaced buses and
$\mathcal{E} \subseteq \mathcal{V} \times \mathcal{V}$ denotes the set of effective electrical interconnections.
A schematic illustration is shown in Fig.~\ref{fig:sys_1}.
The network admittance matrix is denoted by $Y = G + jB \in \mathbb{C}^{N\times N}$,
where $G$ and $B$ are the conductance and susceptance matrices, respectively.
For the voltage phasor $V_i \angle \theta_i$ at bus $i \in \mathcal{V}$,
the active and reactive power injections are given by the standard power-flow equations
\begin{subequations} \label{eq:power-flow}
\begin{align}
    P_i &= V_i^2 G_{ii} + V_i \sum_{k \neq i} V_k \left( G_{ik}\cos\theta_{ik} + B_{ik}\sin\theta_{ik} \right) \\
    Q_i &= -V_i^2 B_{ii} + V_i \sum_{k \neq i} V_k \left( G_{ik}\sin\theta_{ik} - B_{ik}\cos\theta_{ik} \right),
\end{align}
\end{subequations}
where $\theta_{ik} := \theta_i - \theta_k$.
These expressions retain the effects of line resistance and are therefore applicable to general lossy networks,
which is essential for converter-based systems with non-negligible $R/X$ ratios.
\begin{figure}[t]
\centering
\includegraphics[width=0.77\linewidth]{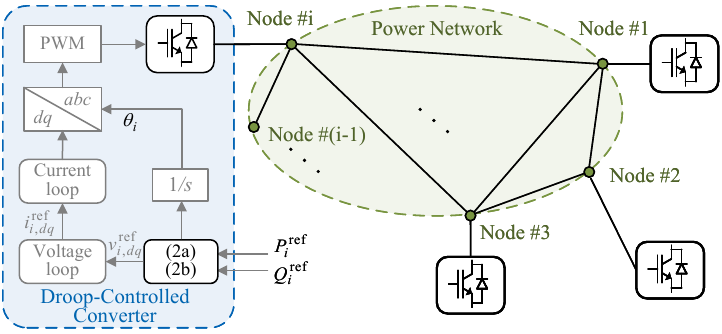}
\vspace{-6pt} 
\caption{Schematic of the droop-controlled islanded microgrid.}
\label{fig:sys_1}
\vspace{-14pt} 
\end{figure}

Among GFM converter-based microgrids, this paper focuses on the droop-controlled setting, 
which is characterized by low or negligible physical inertia, pronounced sensitivity to disturbances, 
time-varying operating conditions, and strong angle-voltage coupling~\cite{2020microgrid, Schiffer2014DroopStability, wang2023droop}. 
Specifically, the coupling here denotes the nonlinear dependence of network power-flow equations on both phase-angle differences and voltage magnitudes, 
while these coupled power terms drive the nodal dynamics through additive input channels in the control law.
Under the standard time-scale separation assumption~\cite{He2021}, 
the inner voltage and current loops are neglected because of their much faster responses, 
leaving the state of each converter node defined as $x_i=[\theta_i,V_i]^\top$,
where $\theta_i$ and $V_i$ denote the phase angle and voltage magnitude, respectively.
The corresponding dynamics are governed by
\begin{subequations}\label{eq:droop_sys1}
\begin{align}
    \dot{\theta}_i &= \omega_{\mathrm{nom}} - m_{p,i} \big( P_i - P^{\mathrm{ref}}_i \big) \\
    \tau_{v,i} \dot{V}_i &= V_{\mathrm{nom},i} - V_i - n_{q,i} \big( Q_i - Q^{\mathrm{ref}}_i \big),
\end{align}
\end{subequations}
where $P_i^{\mathrm{ref}}$ and $Q_i^{\mathrm{ref}}$ are the active and reactive power references,
$\omega_{\mathrm{nom}}$ and $V_{\mathrm{nom},i}$ are the nominal frequency and voltage setpoints,
$m_{p,i}>0$ and $n_{q,i}\ge 0$ are the droop coefficients, and $\tau_{v,i}>0$ is the time constant of the voltage control loop.
Stacking the nodal states yields $x=[\theta^\top,V^\top]^\top \in \mathbb{R}^{2N}$,
so that the autonomous system can be written compactly as $\dot{x}=f(x)$.

\vspace{-10pt} 
\subsection{Operating Conditions and Admissible Domain}
The autonomous operation corresponds to constant power references.
To capture the volatility and stochasticity of renewable generation and demand, 
we model the references as
time-varying perturbations around nominal values~\cite{Ru2024Slowly}:
\[
P_i^{\mathrm{ref}}(t)=P_{i,0}^{\mathrm{ref}}+ p_{i}(t),\quad
Q_i^{\mathrm{ref}}(t)=Q_{i,0}^{\mathrm{ref}}+ q_{i}(t),
\]
where $p_{i}(t)$ and $q_{i}(t)$ are time-varying exogenous inputs.
The resulting non-autonomous system can be written as
\begin{equation}\label{eq:droop_sys}
\dot{x}=f(x,u(t)),
\end{equation}
where $u(t):=[p^\top(t), q^\top(t)]^\top$. 
For brevity, write the system as $\dot{x}=f(x,t)$.
In the sequel, the autonomous case characterizes spontaneous synchronization and regional stability, 
whereas the non-autonomous case quantifies robust performance under persistent variations and disturbances.

To formulate meaningful stability conditions, 
we restrict the analysis to a physically admissible operating domain $\mathcal D$.
Specifically, let $\mathcal D$ be a compact set such that
$V_i \in [\underline{V}, \overline{V}]$ for all $i \in \mathcal{V}$,
and $|\theta_{ik}| \le \gamma_{\max} < \pi/2$ for all $(i,k)\in\mathcal E$.
Within $\mathcal D$, the trigonometric terms satisfy
\begin{equation} \label{eq:trig_bounds}
    \cos\theta_{ik} \ge \cos\gamma_{\max} > 0, \quad |\sin\theta_{ik}| \le \sin\gamma_{\max},
\end{equation}
and the voltage products satisfy
\begin{equation} \label{eq:voltage_bounds}
    \underline{V}^2 \le V_i V_k \le \overline{V}^2 .
\end{equation}

These bounds yield the uniform estimates essential for the subsequent blockwise contraction stability analysis.

\vspace{-10pt} 
\subsection{Semi-Contraction Preliminaries}

Contraction theory offers an equilibrium-free view of stability by studying convergence 
between neighboring trajectories rather than toward a fixed equilibrium. 
Such convergence can be certified through the matrix measure of the system Jacobian along trajectories, 
making the approach suitable for drifting operating conditions. 
Semi-contraction further extends this idea by using a seminorm to exclude physically neutral directions from the stability characterization.

We next recall the semi-contraction notions used in this paper. 
Standard definitions of seminorms and induced matrix measures follow~\cite[Defs.~5.1 and~5.4]{FBbook}.

\begin{definition}[{Semi-contraction~\cite[Def.~5.9]{FBbook}}] \label{def:semi_contraction}
\textit{Consider a time-varying nonlinear system}
\begin{equation} \label{eq:nonaut_sys}
\dot{x}=f(x,t), \qquad x\in\mathbb{R}^n,\ t\in\mathbb{R}_{+},
\end{equation}
\textit{where $f(\cdot,t)$ is continuously differentiable in $x$, and $Df(x,t)$ denotes its Jacobian. 
Let $\|\cdot\|_s$ be a seminorm with induced matrix measure $\mu_{s}$, and let $\mathcal{D}\subseteq\mathbb{R}^n$ be convex. 
The system is semi-contracting on $\mathcal{D}$ with rate $c>0$ if}
\begin{equation}
\mu_{s}\big(Df(x,t)\big)\le -c,
\qquad \forall x\in\mathcal{D},\ \forall t\ge0.
\end{equation}
\end{definition}

\begin{remark}
In islanded microgrids, the seminorm removes the neutral rotational direction, 
allowing stability to be characterized on the physically meaningful transverse subspace.
\end{remark}

Fig.~\ref{fig:con_1} illustrates the projected convergence behavior. 
Since kernel directions are ignored in Definition~\ref{def:semi_contraction}, 
their infinitesimal invariance is required for a well-posed projected analysis.
\begin{figure}[t]
\centering
\includegraphics[width=0.72\linewidth]{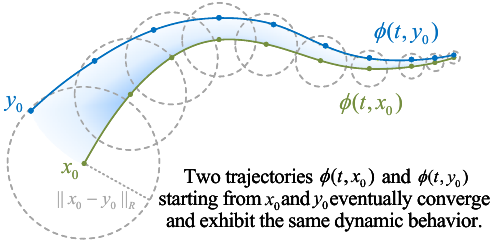}
\vspace{-6pt} 
\caption{Semi-contraction and convergence in the $R$-projected space. 
The gray dashed balls indicate a distance bound that decays over time.}
\label{fig:con_1}
\vspace{-15pt} 
\end{figure}



\begin{definition}[{Infinitesimal kernel invariance~\cite[Def.~12]{Jafarpour2022}}]
\label{def:kernel_invariance}
\textit{Consider system~\eqref{eq:nonaut_sys} and a seminorm $\|\cdot\|_s$ with kernel 
$\ker\|\cdot\|_s := \{ v\in\mathbb{R}^n \mid \|v\|_s = 0 \}$.
The kernel $\ker\|\cdot\|_s$ is said to be infinitesimally invariant on $\mathcal D$ if}
\begin{equation}
Df(x,t)\,\ker\|\cdot\|_s \subseteq \ker\|\cdot\|_s,
\qquad \forall x\in\mathcal D,\ \forall t\ge 0.
\end{equation}
\end{definition}

\begin{remark}
This condition ensures that kernel perturbations remain dynamically decoupled from the projected dynamics.
\end{remark}

We next develop the projected stability analysis and derive the corresponding contraction condition.
\section{Contraction-Based Stability Analysis}
This section establishes a contraction stability condition for the system.
By constructing a projected representation to handle rotational symmetry and verifying kernel invariance, 
we analyze the projected Jacobian blockwise to quantify the angle, voltage, and cross-coupling effects. 
This yields a computable condition that 
underpins the forward-invariant set and robustness results in Section~IV.

\vspace{-10pt}
\subsection{Projected Dynamics and Kernel Invariance}
Because power injections are invariant to uniform shifts in phase angles $\theta \mapsto \theta + \alpha \mathbf{1}_N$, 
the system dynamics possess a neutral rotational mode along $\mathbf{1}_N$.
To remove this mode, 
let $R_\theta \in \mathbb{R}^{(N-1)\times N}$ have orthonormal rows spanning $\mathrm{span}\{\mathbf{1}_N\}^\perp$. 
Then $R_\theta \mathbf{1}_N = \mathbf{0}$ and $R_\theta R_\theta^\top = I_{N-1}$. 
Since the rotational symmetry acts only on the angular states, the voltage states are left unchanged. 
Accordingly, define the projection matrix $R \in \mathbb{R}^{(2N-1)\times 2N}$ as
\begin{equation}\label{eq:projection_matrix}
R :=
\begin{bmatrix}
R_\theta & \mathbf{0} \\
\mathbf{0} & I_N
\end{bmatrix},
\qquad
\|x\|_R := \|Rx\|_2 .
\end{equation}
The seminorm $\|x\|_R$ thus captures only deviations transverse to the rotational symmetry. 
Since $R$ has orthonormal rows, its Moore--Penrose pseudoinverse reduces to $R^\dagger = R^\top$, 
yielding the orthogonal projector
\begin{equation}
\Pi := R^\dagger R
= \mathrm{blkdiag}(R_\theta^\top R_\theta,\, I_N).
\end{equation}

As $\Pi$ projects the state onto the transverse subspace, 
$\|x\|_R = \|\Pi x\|_2$ measures the Euclidean norm of the projected state.
By construction, 
$\ker\|\cdot\|_R=\ker(R)=\mathrm{span}\{\mathbf{1}_{2N}^\theta\}$,
where $\mathbf{1}_{2N}^\theta := [\mathbf{1}_N^\top,\, \mathbf{0}_N^\top]^\top$.
The time-varying power references enter additively and therefore do not affect the Jacobian $J(x,t):=Df(x,t)$. 
Rotational symmetry then implies
$J(x,t)\mathbf{1}_{2N}^\theta=\mathbf{0}_{2N}$ 
for all $x\in\mathcal D$ and $t\ge0$. 
Equivalently,
\[
    J(x,t)\ker(R)\subseteq \ker(R),
\qquad
\forall x\in\mathcal D,\ \forall t\ge 0.
\]

Thus, $\ker\|\cdot\|_R$ is infinitesimally invariant on $\mathcal D$, 
so the corresponding $R$-weighted matrix measure $\mu_R(\cdot)$ is well defined for the subsequent semi-contraction analysis.

\vspace{-5pt}
\subsection{Blockwise Jacobian Analysis for Semi-Contraction} \label{subsec:semi_contraction}

For the nonlinear system $\dot{x}=f(x,t)$, the Jacobian $J(x,t)$ describes 
the variational dynamics between neighboring trajectories, 
i.e., $\dot{\delta x}=J(x,t)\delta x$. 
Here, $J(x,t)$ is used not as a small-signal linearization around an equilibrium, 
but as a differential description of how trajectory deviations evolve, 
thereby yielding a stability certificate for the original system.

With kernel invariance established,  
semi-contraction can be assessed from the projected Jacobian
$J_R(x,t)=RJ(x,t)R^\top$ 
as the matrix measure satisfies $\mu_R(J(x,t)) = \mu_2(J_R(x,t))$~\cite[Thm.~6]{Jafarpour2022}.
Partitioning $J(x,t)$ into angle and voltage blocks, the projected Jacobian
$J_R(x,t)\in\mathbb{R}^{(2N-1)\times(2N-1)}$ is formulated as
\begin{equation} \label{eq:projected_jacobian}
    J_{R}(x,t) = 
    \begin{bmatrix}
        R_\theta J_{\theta\theta} R_\theta^\top & R_\theta J_{\theta V} \\
        J_{V\theta} R_\theta^\top & J_{VV}
    \end{bmatrix}.
\end{equation}

Semi-contraction of the projected dynamics is governed by the symmetric part of $J_R(x,t)$,
defined as $\mathcal{S}(x,t):=\frac{1}{2}(J_R+J_R^\top)$, since the Euclidean matrix measure satisfies 
$\mu_2(J_R)=\lambda_{\max}(\mathcal{S})$.
It admits the block form
\begin{equation} \label{eq:symmetric_S}
    \mathcal{S}(x,t) =
\begin{bmatrix}
        \mathcal{S}_{\theta\theta} & \mathcal{S}_{\theta V} \\
        \mathcal{S}_{\theta V}^\top & \mathcal{S}_{VV}
\end{bmatrix}.
\end{equation}
Here, $\mathcal{S}_{\theta\theta}$ and $\mathcal{S}_{VV}$ represent the angle and voltage channels, respectively, 
while $\mathcal{S}_{\theta V}$ captures their cross-coupling.

This decomposition maps the projected Jacobian to the underlying physical mechanisms, 
with details provided in Appendix~\ref{app:jacobian_derivation}. 
The semi-contraction analysis thus reduces to establishing uniform blockwise bounds over $\mathcal{D}$.

\subsubsection{Active Power--Angle Block Characterization}
The block $\mathcal S_{\theta\theta}$ characterizes the synchronizing action in the active power--angle channel.
Although $J_{\theta\theta}(x,t)$ has zero row sums, 
it is generally asymmetric in lossy networks with heterogeneous active-power control coefficients.
Its symmetric part admits the decomposition
\begin{equation}
    \frac{1}{2}(J_{\theta\theta} + J_{\theta\theta}^\top) = -L_{\mathrm{sym}}(x) + \Delta(x),
    \label{eq:laplacian_decomposition}
\end{equation}
where $L_{\mathrm{sym}}(x)$ is a symmetric weighted Laplacian associated with the effective synchronizing weights $\{w_{ik}^{\mathrm{sym}}(x)\}$, and
$\Delta(x):=\operatorname{diag}\!\left(\frac{1}{2}(J_{\theta\theta}+J_{\theta\theta}^\top)\mathbf 1_N\right)$
collects the residual diagonal term due to the heterogeneity.
Over the admissible domain $\mathcal D$,~\eqref{eq:trig_bounds}--\eqref{eq:voltage_bounds} imply
$w_{ik}^{\mathrm{sym}}(x)\ge \underline w_{ik}$, where
\begin{multline}
    \underline{w}_{ik} := \frac{1}{2} \underline{V}^2 \Big( (m_{p,i} + m_{p,k}) B_{ik} \cos\gamma_{\max} \\
    - |m_{p,i} - m_{p,k}| |G_{ik}| \sin\gamma_{\max} \Big).
\end{multline}
Let $\underline L$ denote the constant Laplacian constructed from $\{\underline w_{ik}\}$. 
Then $L_{\mathrm{sym}}(x)-\underline L$ is a Laplacian with nonnegative edge weights and hence positive semidefinite.
Define the heterogeneity penalty
\begin{equation}
\delta_\theta := \sup_{x \in \mathcal{D}} \max_{i \in \mathcal{V}} \Delta_{ii}(x).
\end{equation}

\begin{lemma}[Uniform Synchronizing Bound]
\label{lem:bound_theta}
For the projected active power--angle block $\mathcal{S}_{\theta\theta}(x,t)$, it holds that
\begin{equation}
\lambda_{\max}\bigl(\mathcal{S}_{\theta\theta}(x,t)\bigr) \le -c_\theta < 0, \quad \forall x\in\mathcal D,\ \forall t\ge 0,
\label{eq:htheta_bound}
\end{equation}
provided that $\lambda_2(\underline{L}) > \delta_\theta$. 
Here, the uniform synchronizing margin is defined as $c_\theta := \lambda_2(\underline{L}) - \delta_\theta$, 
where $\lambda_2(\underline{L})$ is the algebraic connectivity of $\underline{L}$.
\end{lemma}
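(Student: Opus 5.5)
The plan is to bound the Rayleigh quotient of $\mathcal{S}_{\theta\theta}$ on the transverse subspace. Since $\mathcal{S}_{\theta\theta}=\tfrac12 R_\theta (J_{\theta\theta}+J_{\theta\theta}^\top)R_\theta^\top$, we have, for any unit $y\in\mathbb{R}^{N-1}$ and $z:=R_\theta^\top y$,
\[
y^\top \mathcal{S}_{\theta\theta}\, y = z^\top \tfrac12 (J_{\theta\theta}+J_{\theta\theta}^\top) z .
\]
Here $\|z\|_2=1$ and $z\perp\mathbf 1_N$, because $R_\theta R_\theta^\top=I_{N-1}$ and $R_\theta\mathbf 1_N=\mathbf 0$. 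Consequently $\lambda_{\max}(\mathcal{S}_{\theta\theta})$ equals the maximum of $z^\top \tfrac12(J_{\theta\theta}+J_{\theta\theta}^\top)z$ over unit vectors $z\in\mathrm{span}\{\mathbf 1_N\}^\perp$, and it suffices to show this quadratic form is at most $-c_\theta$.

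Next we insert the decomposition \eqref{eq:laplacian_decomposition}, which gives
\[
z^\top \tfrac12 (J_{\theta\theta}+J_{\theta\theta}^\top) z = -z^\top L_{\mathrm{sym}}(x) z + z^\top \Delta(x) z ,
\]
and treat the two terms separately.

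For the Laplacian term, write $-z^\top L_{\mathrm{sym}}z = -z^\top \underline L z - z^\top (L_{\mathrm{sym}}-\underline L)z$. The second piece is nonpositive, since $L_{\mathrm{sym}}(x)-\underline L$ is a Laplacian with nonnegative weights whenever $w^{\mathrm{sym}}_{ik}(x)\ge\underline w_{ik}$ on $\mathcal D$. Because $\underline L$ is a symmetric Laplacian with $\underline L\mathbf 1_N=0$, the Courant--Fischer characterization gives $z^\top\underline L z\ge \lambda_2(\underline L)$ for unit $z\perp\mathbf 1_N$.

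The residual term is diagonal, so $z^\top\Delta(x) z=\sum_i \Delta_{ii}(x) z_i^2\le \max_i\Delta_{ii}(x)\le\delta_\theta$ uniformly over $x\in\mathcal D$ and $t$; time does not enter because the references only add to $f$. Combining the two bounds gives $\lambda_{\max}(\mathcal S_{\theta\theta})\le -\lambda_2(\underline L)+\delta_\theta=-c_\theta$, and this is negative under the hypothesis $\lambda_2(\underline L)>\delta_\theta$.

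The main obstacle is justifying \eqref{eq:laplacian_decomposition} together with the bound $w^{\mathrm{sym}}_{ik}\ge\underline w_{ik}$, which the statement treats as given from the appendix. To check it, I would compute $(J_{\theta\theta})_{ik}=-m_{p,i}\,\partial P_i/\partial\theta_k = m_{p,i}V_iV_k(B_{ik}\cos\theta_{ik}-G_{ik}\sin\theta_{ik})$ for $k\ne i$. Symmetrizing this, and using $\sin\theta_{ki}=-\sin\theta_{ik}$, gives off-diagonal entries
\[
\tfrac12 V_iV_k\bigl[(m_{p,i}+m_{p,k})B_{ik}\cos\theta_{ik}-(m_{p,i}-m_{p,k})G_{ik}\sin\theta_{ik}\bigr].
\]
Applying \eqref{eq:trig_bounds}--\eqref{eq:voltage_bounds} then yields $\underline w_{ik}$; for the $B_{ik}$ term this uses $B_{ik}\ge0$ and the lower bound $\underline V^2$. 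Finally, I would confirm that the diagonal of the symmetrized matrix equals minus the sum of these weights plus the row-sum residual $\Delta$. This holds by the definition of $\Delta$, since subtracting $\operatorname{diag}$ of the row sums leaves an exact Laplacian.
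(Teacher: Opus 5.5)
Your argument is correct and is essentially the paper's proof: the paper splits $\lambda_{\max}\bigl(R_\theta(-L_{\mathrm{sym}}+\Delta)R_\theta^\top\bigr)$ with Weyl's inequality and bounds $R_\theta\Delta R_\theta^\top$ by interlacing, which amounts to your direct bound on the combined Rayleigh quotient over unit $z\perp\mathbf 1_N$. Two small caveats. First, the step $z^\top\underline L z\ge\lambda_2(\underline L)$ on $\mathbf 1_N^\perp$ needs $0$ to be the smallest eigenvalue of $\underline L$; this follows from the hypothesis, because $\sum_i\Delta_{ii}(x)=\mathbf 1_N^\top J_{\theta\theta}\mathbf 1_N=0$ gives $\delta_\theta\ge 0$ and hence $\lambda_2(\underline L)>0$. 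Second, in your side check of $\underline w_{ik}$, the factor $\underline V^2$ can be applied to the whole bracket only when that bracket, and hence $\underline w_{ik}$, is nonnegative; bounding term by term, the subtracted $G_{ik}$ term would instead require $\overline V^2$.
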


Lemma~\ref{lem:bound_theta}, proved in Appendix~\ref{appendix:proof_theta_block}, 
indicates that the angle channel is uniformly contractive when network synchrony outweighs 
the degradation due to the heterogeneity.
Specifically, $\lambda_2(\underline{L})$ measures the network synchronizing strength, 
primarily shaped by susceptances, voltage levels, and electrical connectivity, 
whereas $\delta_\theta$ captures the degradation induced by network losses and nonuniform droop gains.
In the ideal limit of a purely inductive and homogeneous network,
$\delta_\theta=0$ and the block reduces to an ideal Laplacian synchronizing structure.
\begin{figure}[t]
    \centering
    \includegraphics[width=0.78\linewidth]{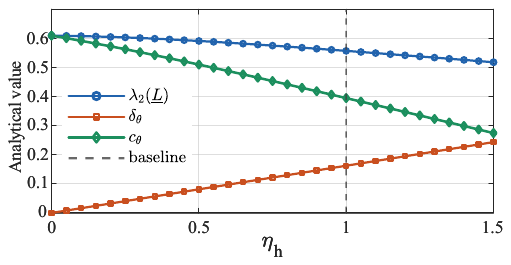}
    \vspace{-9pt}
    \caption{Impact of heterogeneity on $c_\theta$.
    The droop gains are parameterized as $m_p(\eta_h)=\bar{m}_p \mathbf{1}+\eta_h(m_p^0-\bar{m}_p \mathbf{1})$,
    where $\bar{m}_p:=\frac{1}{N}\mathbf{1}^\top m_p^0$ is the mean droop gain.
    The baseline case is $\eta_h=1$ (dashed line), and $\eta_h=0$ gives the uniform-droop case.}
    \label{fig:stage3_ctheta_sensitivity}
    \vspace{-16pt}
\end{figure}

Fig.~\ref{fig:stage3_ctheta_sensitivity} illustrates this engineering interpretation by 
decomposing $c_\theta$ under a mean-preserving increase in droop heterogeneity.
As the heterogeneity increases, the penalty $\delta_\theta$ rises markedly, while $\lambda_2(\underline{L})$ reduces mildly, 
thereby eroding the net margin $c_\theta$.
This aligns with the intuition that homogeneous settings promote coherent synchronization.

\subsubsection{Reactive Power--Voltage Block Characterization}
The block $\mathcal S_{VV}$ characterizes restoring in the reactive power--voltage channel.
Unlike the active power--angle block, it does not admit a useful Laplacian decomposition.
Its spectral properties are instead governed by the local voltage-control dynamics and network reactive coupling.
Accordingly, we bound its spectrum via Gershgorin's theorem~\cite[Thm.~6.1.1]{horn2012matrix}.

For each node $i\in\mathcal V$, define the Gershgorin disc center and radius of $\mathcal S_{VV}(x,t)$ as
\begin{equation}
    c_i(x):=[\mathcal S_{VV}(x,t)]_{ii},
    \quad
    r_i(x):=\sum_{k\neq i}\big|[\mathcal S_{VV}(x,t)]_{ik}\big|.
\end{equation}

Over the compact domain $\mathcal D$,~\eqref{eq:trig_bounds}--\eqref{eq:voltage_bounds} yield uniform bounds
$c_i(x)\le \bar c_i$ and $r_i(x)\le \bar r_i$, where
\begin{equation}
    \bar{c}_i := \frac{2n_{q,i}\underline{V} B_{ii}-1}{\tau_{v,i}} + \sum_{k \neq i}\frac{n_{q,i}}{\tau_{v,i}} \max_{|\theta| \le \gamma_{\max}} \big|B_{ik} \cos\theta - G_{ik} \sin\theta\big|,
\end{equation}
and the worst-case radius $\bar{r}_i$ is given by
\begin{equation}
    \bar{r}_i := \frac{1}{2} \sum_{k \neq i} \left( \frac{n_{q,i}}{\tau_{v,i}} + \frac{n_{q,k}}{\tau_{v,k}} \right) \overline{V} \max_{|\theta| \le \gamma_{\max}} \big|B_{ik} \cos\theta - G_{ik} \sin\theta\big|.
\end{equation}
\begin{lemma}[Uniform Voltage Restoring Bound]
\label{lem:bound_V}
For the projected reactive power--voltage block $\mathcal{S}_{VV}(x,t)$, it holds that
\begin{equation}
        \lambda_{\max}\bigl(\mathcal S_{VV}(x,t)\bigr)\le -c_V<0,
    \quad \forall x\in\mathcal D,\ \forall t\ge 0,
\end{equation}
provided that $\bar{c}_i + \bar{r}_i < 0$ for all $i \in \mathcal{V}$. 
The uniform voltage restoring margin is defined as
$c_V := -\max_{i \in \mathcal{V}} (\bar{c}_i + \bar{r}_i)$.
\end{lemma}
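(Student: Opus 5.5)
The argument is a direct application of Gershgorin's theorem~\cite[Thm.~6.1.1]{horn2012matrix} to the real symmetric matrix $\mathcal S_{VV}(x,t)$, combined with the uniform bounds $c_i(x)\le\bar c_i$ and $r_i(x)\le\bar r_i$ over $\mathcal D$. We first note that $\mathcal S_{VV}=\tfrac12(J_{VV}+J_{VV}^\top)$ does not depend on $t$. The power references enter additively, so they do not affect the Jacobian. Also, the projection $R$ acts as the identity on the voltage block, so $\mathcal S_{VV}$ is precisely the symmetric part of $J_{VV}$. Since $\mathcal S_{VV}$ is symmetric, all its eigenvalues are real. Gershgorin's theorem then places every eigenvalue $\lambda$ in some interval $[c_i(x)-r_i(x),\,c_i(x)+r_i(x)]$. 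Hence
\begin{equation*}
\lambda_{\max}\bigl(\mathcal S_{VV}(x,t)\bigr)\le \max_{i\in\mathcal V}\bigl(c_i(x)+r_i(x)\bigr)\le \max_{i\in\mathcal V}\bigl(\bar c_i+\bar r_i\bigr)=-c_V ,
\end{equation*}
and $c_V>0$ follows from the hypothesis $\bar c_i+\bar r_i<0$ for all $i$. The maximum is over finitely many nodes, so strictness is preserved.

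The substantive step is to verify that $\bar c_i$ and $\bar r_i$ really dominate $c_i(x)$ and $r_i(x)$ uniformly on $\mathcal D$. I would derive the entries of $J_{VV}$ from~\eqref{eq:droop_sys1} and~\eqref{eq:power-flow} (Appendix~\ref{app:jacobian_derivation}). Writing $\beta_{ik}(\theta):=B_{ik}\cos\theta_{ik}-G_{ik}\sin\theta_{ik}$, one has
\begin{equation*}
\frac{\partial Q_i}{\partial V_i}=-2V_iB_{ii}+\sum_{k\neq i}V_k\bigl(G_{ik}\sin\theta_{ik}-B_{ik}\cos\theta_{ik}\bigr),\qquad
\frac{\partial Q_i}{\partial V_k}=V_i\bigl(G_{ik}\sin\theta_{ik}-B_{ik}\cos\theta_{ik}\bigr),
\end{equation*}
so that
\begin{equation*}
[J_{VV}]_{ii}=\frac{-1+2n_{q,i}V_iB_{ii}+n_{q,i}\sum_{k\ne i}V_k\beta_{ik}}{\tau_{v,i}},\qquad
[J_{VV}]_{ik}=\frac{n_{q,i}}{\tau_{v,i}}V_i\beta_{ik}.
\end{equation*}
For the diagonal, I would use $B_{ii}\le 0$, so that $2n_{q,i}V_iB_{ii}\le 2n_{q,i}\underline V B_{ii}$ whenever $V_i\ge\underline V$. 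Each coupling term is bounded by $|\beta_{ik}|\le\max_{|\theta|\le\gamma_{\max}}|\beta_{ik}(\theta)|$, using $|\theta_{ik}|\le\gamma_{\max}$ on $\mathcal D$. Together these give $c_i(x)\le\bar c_i$. For the off-diagonal entries of the symmetric part, $[\mathcal S_{VV}]_{ik}=\tfrac12\bigl([J_{VV}]_{ik}+[J_{VV}]_{ki}\bigr)$. The triangle inequality, $V_i,V_k\le\overline V$, and $|\beta_{ki}|=|B_{ik}\cos\theta_{ik}+G_{ik}\sin\theta_{ik}|$ then yield a term-by-term bound of the stated form. Summing over $k\ne i$ gives $r_i(x)\le\bar r_i$.

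The main obstacle is exactly this entrywise bookkeeping. Two issues need care. First, the diagonal coupling term carries a factor $V_k\le\overline V$, which must be absorbed consistently with the paper's definition of $\bar c_i$; a per-unit normalization or an extra factor $\overline V$ may be needed. Second, the maximum of $|B_{ik}\cos\theta+G_{ik}\sin\theta|$ over $|\theta|\le\gamma_{\max}$ must coincide with that of $|B_{ik}\cos\theta-G_{ik}\sin\theta|$. This holds because the interval $[-\gamma_{\max},\gamma_{\max}]$ is symmetric under $\theta\mapsto-\theta$. I would settle both points first and adjust the constants if necessary. The spectral conclusion itself is immediate once these dominations are in place.
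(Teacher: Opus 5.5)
Your proposal is correct and follows the paper's proof essentially step for step. Both apply Gershgorin's theorem to the symmetric block $\mathcal S_{VV}=\tfrac12(J_{VV}+J_{VV}^\top)$, bound the centers by setting $V_i=\underline V$ in the $2n_{q,i}V_iB_{ii}$ term (since $B_{ii}<0$) and maximizing the trigonometric terms, and bound the radii from the symmetrized off-diagonal entries using $V_i,V_k\le\overline V$ and the symmetry of $[-\gamma_{\max},\gamma_{\max}]$. The first issue you flag is genuine, and the paper's proof skips it too: the coupling sum $\sum_{k\neq i}V_k(B_{ik}\cos\theta_{ik}-G_{ik}\sin\theta_{ik})$ in $c_i(x)$ carries $V_k\le\overline V$. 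So the printed $\bar c_i$ is a valid uniform bound only if that sum is multiplied by $\overline V$ (or if $\overline V\le 1$), and per-unit normalization does not remove the factor since $\overline V=1.05$ in the case study; adjusting the constant, as you planned, is the right fix.
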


Lemma~\ref{lem:bound_V}, proved in Appendix~\ref{appendix:proof_voltage_block}, 
determines whether the local voltage-restoring action at each bus dominates the worst-case reactive coupling from the network.
Here, $\bar c_i$ captures the local restoring term, mainly shaped by the voltage filter time constant and self-admittance, 
while $\bar r_i$ bounds the aggregate coupling from neighboring buses.
Thus, $\bar c_i+\bar r_i<0$ for all $i\in\mathcal V$ ensures uniform contractivity over $\mathcal D$, 
and the resulting margin $c_V$ provides a computable certificate linking controller tuning and network parameters to voltage restoring performance.
Notably, enhanced restoring is facilitated by reduced filter time constants and appropriately scaled reactive droop gains.

\subsubsection{Cross-Coupling Block Characterization}
The block $\mathcal S_{\theta V}$ captures the interaction between the active power--angle and reactive power--voltage channels, 
which is exacerbated by network conductances and heterogeneous operating conditions.
Unlike the two diagonal blocks, it provides no intrinsic restoring action.
Its effect is therefore quantified by a uniform norm bound over $\mathcal D$.
\begin{lemma}[Uniform Cross-Coupling Bound] \label{lem:bound_cross}
For the projected off-diagonal block $\mathcal{S}_{\theta V}(x,t)$, 
there exists a constant $\beta \ge 0$ such that
\begin{equation}
    \|\mathcal S_{\theta V}(x,t)\|_2 \le \beta,
    \qquad \forall x\in\mathcal D,\ \forall t\ge 0, 
\end{equation}
where the uniform cross-coupling margin $\beta$ is given by the constrained maximum 
$ \beta := \max_{x \in \mathcal{D}} \|\mathcal{S}_{\theta V}(x,t)\|_2 $.
\end{lemma}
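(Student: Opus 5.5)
The plan is to show that $\beta$ is well defined and finite by a compactness and continuity argument, and then to supplement this with an explicit entrywise upper bound that can be computed without solving the constrained maximization.

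\emph{Step 1: form of the block.} From the block form of $J_R$ in \eqref{eq:projected_jacobian}, the off-diagonal block of $\mathcal S$ is
\[
\mathcal S_{\theta V}(x,t)=\tfrac12\bigl(R_\theta J_{\theta V}(x)+ (J_{V\theta}(x)R_\theta^\top)^\top\bigr)=\tfrac12 R_\theta\bigl(J_{\theta V}(x)+J_{V\theta}(x)^\top\bigr).
\]
The time-varying references $p(t),q(t)$ enter \eqref{eq:droop_sys1} additively, so they drop out on differentiation. Hence $J_{\theta V}$ and $J_{V\theta}$, and therefore $\mathcal S_{\theta V}$, depend on $x$ only. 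This is the same observation already used for kernel invariance, and it makes the bound automatically uniform in $t$.

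\emph{Step 2: compactness.} Each entry of $J_{\theta V}$ and $J_{V\theta}$ comes from differentiating the power-flow equations \eqref{eq:power-flow}, scaled by $m_{p,i}$ or by $n_{q,i}/\tau_{v,i}$. These entries are polynomials in $V$ multiplied by $\sin\theta_{ik}$ and $\cos\theta_{ik}$, so $x\mapsto\mathcal S_{\theta V}(x)$ is continuous. The induced $2$-norm is continuous, and $\mathcal D$ is compact. By the Weierstrass extreme value theorem the supremum of $\|\mathcal S_{\theta V}(x)\|_2$ over $\mathcal D$ is attained, so $\beta$ as defined is a finite maximum and is nonnegative. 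This is essentially the whole proof of the lemma as stated.

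\emph{Step 3: a closed-form bound.} For practical use, I would bound the norm without computing the maximum over $\mathcal D$.
\begin{itemize}
\item Because $R_\theta$ has orthonormal rows, $\|R_\theta\|_2=1$, which gives $\|\mathcal S_{\theta V}\|_2\le \tfrac12\|J_{\theta V}+J_{V\theta}^\top\|_2$.
\item Next apply $\|A\|_2\le\sqrt{\|A\|_1\|A\|_\infty}$.
\item Each entry of $A=J_{\theta V}+J_{V\theta}^\top$ is bounded using \eqref{eq:trig_bounds}--\eqref{eq:voltage_bounds}. For example, $\partial P_i/\partial V_k=V_i(G_{ik}\cos\theta_{ik}+B_{ik}\sin\theta_{ik})$ is bounded by $\overline V\max_{|\theta|\le\gamma_{\max}}|G_{ik}\cos\theta+B_{ik}\sin\theta|$. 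The diagonal entries $\partial P_i/\partial V_i$ contain $2V_iG_{ii}$ plus a sum over neighbours.
\end{itemize}
The step needing most care is this bookkeeping. The diagonal terms involve self-conductances, which is where losses enter, and the mixed scalings $m_{p,i}$ versus $n_{q,k}/\tau_{v,k}$ appear in the symmetrized sum. Each such term should be bounded separately with the triangle inequality rather than combined. There is no genuine analytical obstacle here, since compactness alone establishes the lemma.
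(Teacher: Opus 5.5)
Your proposal is correct and matches the paper. The paper gives no separate proof of this lemma; it simply defines $\beta$ as the constrained maximum of $\|\mathcal S_{\theta V}\|_2$ over $\mathcal D$, and that definition makes sense for exactly the reasons in your Steps~1--2: $\mathcal S_{\theta V}=\tfrac12 R_\theta(J_{\theta V}+J_{V\theta}^\top)$ is independent of $t$ and continuous, and it depends only on voltages and edge angle differences, which range over a compact box (so finiteness does not even depend on $\mathcal D$ being bounded along the rotational direction).

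Your Step~3 goes beyond the paper, which computes $\beta$ numerically in the case study; your closed-form bound is valid but more conservative, so it makes $c_\theta c_V>\beta^2$ harder to satisfy.
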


The margin $\beta$ quantifies the worst-case bidirectional coupling between 
the active and reactive control loops over $\mathcal D$.
A larger $\beta$ signifies stronger dynamic interaction and hence greater erosion of 
the synchronizing and restoring effects provided by the two diagonal blocks.
Consequently, overall contraction stability requires the combined diagonal margins to dominate this cross-coupling.

\vspace{-10pt}
\subsection{Contraction Stability Conditions}
The uniform block bounds above provide a computable condition for semi-contraction of the projected dynamics. 
Assume that, for all $x\in\mathcal D$,
\begin{equation} \label{eq:matrix_bounds}
\left\{
\begin{aligned}
    &\, \lambda_{\max} (\mathcal{S}_{\theta\theta})\le -c_\theta, \\
    & \, \lambda_{\max} (\mathcal{S}_{VV}) \le -c_V, \\
    &\, \|\mathcal{S}_{\theta V}\|_2 \le \beta,
\end{aligned}
\right.
\end{equation}
Then the following theorem gives the resulting regional semi-contraction condition.

\begin{theorem}[Regional Semi-Contraction Condition]\label{thm:semi_contraction}
Consider the droop-controlled islanded system $\dot x = f(x,t)$ in~\eqref{eq:droop_sys}, 
defined on a convex admissible domain $\mathcal{D}$. 
Suppose that the projected Jacobian satisfies~\eqref{eq:matrix_bounds} 
for all $x \in \mathcal{D}$.
If
\begin{equation}\label{eq:contraction_condition}
    c_\theta c_V>\beta^2,
\end{equation}
then the system is semi-contracting on $\mathcal D$ with respect to the seminorm $\|\cdot\|_R$.
Specifically,
\begin{equation}\label{eq:contraction_rate}
    \mu_R(J(x,t))\le -c,
    \qquad \forall x\in\mathcal D,\ \forall t\ge 0,
\end{equation}
where the uniform semi-contraction rate is
\begin{equation}
    c := \frac{1}{2}\left(c_\theta + c_V - \sqrt{(c_\theta - c_V)^2 + 4\beta^2}\right) > 0.
\end{equation}
\end{theorem}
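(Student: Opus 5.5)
The plan is to reduce the semi-contraction claim to a bound on the largest eigenvalue of the symmetric part $\mathcal S(x,t)$ of the projected Jacobian. Because $\ker\|\cdot\|_R$ is infinitesimally invariant on $\mathcal D$, we have $\mu_R(J(x,t))=\mu_2(J_R(x,t))=\lambda_{\max}(\mathcal S(x,t))$ by \cite[Thm.~6]{Jafarpour2022}. It therefore suffices to show that $\lambda_{\max}(\mathcal S(x,t))\le -c$ uniformly over $x\in\mathcal D$ and $t\ge0$.

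The main step bounds the quadratic form of $\mathcal S$ blockwise. Take any unit vector $z=[z_\theta^\top,z_V^\top]^\top\in\mathbb R^{2N-1}$, partitioned as in~\eqref{eq:symmetric_S}, and set $a:=\|z_\theta\|_2$ and $b:=\|z_V\|_2$, so that $a^2+b^2=1$. Applying the three bounds in~\eqref{eq:matrix_bounds} together with Cauchy--Schwarz on the off-diagonal term gives
\begin{equation*}
z^\top\mathcal S z = z_\theta^\top\mathcal S_{\theta\theta}z_\theta + 2z_\theta^\top\mathcal S_{\theta V}z_V + z_V^\top\mathcal S_{VV}z_V
\le -c_\theta a^2 + 2\beta ab - c_V b^2 .
\end{equation*}
The right-hand side equals $w^\top M w$, where $w=[a,b]^\top$ is a unit vector and $M:=\begin{bmatrix}-c_\theta & \beta\\ \beta & -c_V\end{bmatrix}$. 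Hence $\lambda_{\max}(\mathcal S)\le\lambda_{\max}(M)$, which reduces the problem to a $2\times2$ comparison matrix built from the block bounds.

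It remains to compute the eigenvalues of $M$. They are
$\frac12\bigl(-(c_\theta+c_V)\pm\sqrt{(c_\theta-c_V)^2+4\beta^2}\bigr)$, so $\lambda_{\max}(M)=-c$ with $c$ exactly as stated in the theorem. To see that $c>0$, note that $c>0$ is equivalent to $(c_\theta+c_V)^2>(c_\theta-c_V)^2+4\beta^2$, which simplifies to $c_\theta c_V>\beta^2$; this is~\eqref{eq:contraction_condition}. The quantities $c_\theta$ and $c_V$ are positive by Lemmas~\ref{lem:bound_theta}--\ref{lem:bound_V}, so $c_\theta+c_V>0$ and the square-root comparison is legitimate. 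Since every bound above is uniform in $x\in\mathcal D$ and $t\ge0$, we obtain~\eqref{eq:contraction_rate}. Convexity of $\mathcal D$ is needed only so that Definition~\ref{def:semi_contraction} applies.

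Most of this argument is routine. The one step needing care is the reduction $\mu_R(J)=\lambda_{\max}(\mathcal S)$: we must confirm that the seminorm is $\|Rx\|_2$ with $RR^\top=I$ and that kernel invariance holds, both of which were established in Section~III-A. The only genuinely nontrivial inequality is the Cauchy--Schwarz estimate $|z_\theta^\top\mathcal S_{\theta V}z_V|\le\beta ab$, which rests on the bound $\|\mathcal S_{\theta V}\|_2\le\beta$ from Lemma~\ref{lem:bound_cross}.
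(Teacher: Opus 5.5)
Your proposal is correct and follows the paper's proof essentially step for step: you bound the quadratic form blockwise via Cauchy--Schwarz, reduce to the $2\times 2$ comparison matrix through the vector of block norms, and compute its largest eigenvalue explicitly as $-c$. Your remark that $c>0$ also requires $c_\theta,c_V>0$ (which $c_\theta c_V>\beta^2$ alone does not guarantee) is a worthwhile detail that the paper leaves implicit.
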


\begin{proof}
Let $z=[z_\theta^\top,z_V^\top]^\top\neq 0$, with $z_\theta\in\mathbb R^{N-1}$ and $z_V\in\mathbb R^N$.
By the block structure of $\mathcal S(x,t)$ and~\eqref{eq:matrix_bounds},
\begin{equation*}
\begin{aligned}
z^\top\mathcal S z
&= z_\theta^\top \mathcal S_{\theta\theta} z_\theta
 + z_V^\top \mathcal S_{VV} z_V
 + 2 z_\theta^\top \mathcal S_{\theta V} z_V \\
&\le -c_\theta\|z_\theta\|_2^2
     -c_V\|z_V\|_2^2
     +2\beta\|z_\theta\|_2\|z_V\|_2.
\end{aligned}
\end{equation*}

Define $u:=[\|z_\theta\|_2,\|z_V\|_2]^\top$. Then $u^\top u=z^\top z$ and
\[
z^\top\mathcal S z
\le
u^\top
\underbrace{
\begin{bmatrix}
-c_\theta & \beta\\
\beta & -c_V
\end{bmatrix}}_{=:M_c}
u.
\]
Hence, for every $z\neq 0$,
\[ \frac{z^\top \mathcal S z}{z^\top z} \le
\frac{u^\top M_c u}{u^\top u} \le  \lambda_{\max}(M_c). \]
Taking the supremum yields $\lambda_{\max}(\mathcal S(x,t)) \le \lambda_{\max}(M_c)$.

Since $M_c$ is symmetric,
\[ \lambda_{\max}(M_c) = \frac{1}{2} \left( -(c_\theta + c_V) + \sqrt{(c_\theta - c_V)^2 + 4\beta^2} \right) = -c. \]
Under~\eqref{eq:contraction_condition}, we have $c>0$. 
Recalling that \(\mu_R(J(x,t))=\lambda_{\max}(\mathcal S(x,t))\), 
it follows that, for all \(x\in\mathcal D\) and \(t\ge 0\),
\begin{equation*}
     \mu_R(J(x,t)) \le \lambda_{\max}(M_c) = -c < 0.  \qedhere 
\end{equation*}
\end{proof}

Theorem~\ref{thm:semi_contraction} establishes a computable condition on $\mathcal D$, 
from which trajectory-level stability properties follow directly.
These properties, however, hold only when trajectories remain in a forward-invariant subset of $\mathcal D$.
We defer the explicit construction of such sets to Section~IV 
and first state the following corollary based on this invariance.

\begin{corollary}[Trajectory Convergence under Contraction Stability] \label{cor:trajectory}
Suppose Theorem~\ref{thm:semi_contraction} holds with respect to the seminorm $\|\cdot\|_R$.
Let $\Omega \subseteq \mathcal{D}$ be a forward-invariant set. 
Then, as illustrated in Fig.~\ref{fig:con_1}, 
the following properties hold for any trajectories $\phi(t,x_0)$ and $\phi(t,y_0)$ evolving in $\Omega$ and all $t \ge 0$
\begin{itemize}
    \item[\textbf{(i)}] \textbf{Exponential decay of trajectory distances:}
    \begin{equation} \label{eq:incremental_stability}
        \|\phi(t,x_0)-\phi(t,y_0)\|_R \le e^{-ct}\|x_0-y_0\|_R .
    \end{equation}

    \item[\textbf{(ii)}] \textbf{Exponential convergence to the stable manifold:}
    For autonomous systems admitting an equilibrium $x^* \in \Omega$, all trajectories converge exponentially 
    to the affine manifold $\mathcal{M} := \{x^* + \ker(R)\}$ with rate $c$, satisfying
    \begin{equation} \label{eq:manifold_convergence}
        \inf_{v \in \ker(R)} \|\phi(t,x_0)-(x^*+v)\|_R \le e^{-ct}\|x_0-x^*\|_R .
    \end{equation}
\end{itemize}
\end{corollary}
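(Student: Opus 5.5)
Part (i) follows from a projected-difference argument; part (ii) follows from (i) by taking the equilibrium trajectory as the comparison trajectory.

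\textbf{Part (i).} Let $x(t)=\phi(t,x_0)$ and $y(t)=\phi(t,y_0)$ both lie in $\Omega\subseteq\mathcal D$. Since $\mathcal D$ is convex, the segment $\xi(s,t)=y(t)+s\,(x(t)-y(t))$, $s\in[0,1]$, lies in $\mathcal D$ for each $t$. The mean-value identity then gives
\[
\dot x-\dot y=A(t)\,(x-y),\qquad A(t):=\int_0^1 J(\xi(s,t),t)\,ds .
\]
The exogenous input $u(t)$ enters additively, so it cancels in the difference. This is why the same argument covers the non-autonomous case.

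Set $e:=R(x-y)$. The kernel invariance established in Section~III-A gives $J(\cdot,t)\ker(R)\subseteq\ker(R)$ for every point of the segment. Integration preserves this, so $A(t)\ker(R)\subseteq\ker(R)$. Decompose $x-y=\Pi(x-y)+(I-\Pi)(x-y)$ with $(I-\Pi)(x-y)\in\ker R$. Using $\Pi=R^\top R$, we get
\[
\dot e = RA(t)(x-y)=RA(t)R^\top e .
\]
Next, $\frac{d}{dt}\tfrac12\|e\|_2^2=e^\top RA R^\top e$, and this is at most $\lambda_{\max}\!\big(\tfrac12(RAR^\top+RA^\top R^\top)\big)\|e\|_2^2$. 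Because $\lambda_{\max}$ of the symmetric part is convex (subadditive), this eigenvalue is bounded by
\[
\int_0^1 \mu_2\!\big(J_R(\xi(s,t),t)\big)\,ds=\int_0^1\mu_R\!\big(J(\xi(s,t),t)\big)\,ds\le -c,
\]
where the last step uses Theorem~\ref{thm:semi_contraction}. Therefore $\frac{d}{dt}\|e\|_2^2\le -2c\|e\|_2^2$. Grönwall's inequality gives $\|e(t)\|_2\le e^{-ct}\|e(0)\|_2$, which is \eqref{eq:incremental_stability} because $\|\cdot\|_R=\|R\,\cdot\|_2$.

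\textbf{Part (ii).} Take $y_0=x^*\in\Omega$. In the autonomous system the equilibrium trajectory is either $\phi(t,x^*)\equiv x^*$, or, for a frequency-synchronized equilibrium, a rotation $x^*+\omega t\,\mathbf 1^\theta_{2N}$. In both cases it differs from $x^*$ only by an element of $\ker R$.

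The seminorm is invariant under kernel shifts, so $\|\phi(t,x_0)-(x^*+v)\|_R=\|\phi(t,x_0)-x^*\|_R$ for every $v\in\ker R$. Hence the infimum in \eqref{eq:manifold_convergence} equals $\|\phi(t,x_0)-\phi(t,x^*)\|_R$. Applying (i) yields the bound.

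\textbf{Main obstacle.} The delicate step is in (i): justifying that the segment between the two trajectories stays where the bound of Theorem~\ref{thm:semi_contraction} applies, and passing from a pointwise measure bound to the averaged matrix $A(t)$. Convexity of $\mathcal D$ is the assumption that handles the first point. If convexity of $\Omega$ itself were needed, I would replace the straight segment by a curve inside $\Omega$. Failing that, I would rely on the fact that Theorem~\ref{thm:semi_contraction} is stated on all of $\mathcal D$, which already contains the segment.
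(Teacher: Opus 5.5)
Your proposal is correct and follows the paper's route. The paper gets (i) in one line by invoking the semi-contraction bound of Theorem~\ref{thm:semi_contraction} on the forward-invariant set $\Omega\subseteq\mathcal D$; your mean-value, kernel-invariance and Gr\"onwall derivation is the explicit version of that step, and the same mechanism the paper later uses to prove Theorem~\ref{thm:invariant_tube}. For (ii), the paper likewise compares against the equilibrium shifted by $v\in\ker(R)$ and uses that $\|\cdot\|_R$ vanishes on $\ker(R)$, which is your argument; your remark that a frequency-synchronized equilibrium drifts only along $\ker(R)$ spells out what the paper leaves implicit under ``rotational symmetry.''
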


\begin{proof}
Because $\Omega\subseteq\mathcal D$ is forward-invariant, 
Theorem~\ref{thm:semi_contraction} directly gives~\eqref{eq:incremental_stability} on $\Omega$.
For~\eqref{eq:manifold_convergence}, set $y_0=x^*+v$ with $v\in\ker(R)$ in~\eqref{eq:incremental_stability}. 
By rotational symmetry, the left-hand side becomes $\|\phi(t,x_0)-(x^*+v)\|_R$, while
$ \|x_0-(x^*+v)\|_R=\|x_0-x^*\|_R $
since $\|\cdot\|_R$ vanishes on $\ker(R)$.
Taking the infimum over $v\in\ker(R)$ gives~\eqref{eq:manifold_convergence}.
\end{proof}

Corollary~\ref{cor:trajectory} formalizes the trajectory interpretation shown in Fig.~\ref{fig:con_1}: 
within a forward-invariant set, projected distances between trajectories decay exponentially, 
implying convergence of autonomous operations toward the stable manifold.
Section~IV completes the analysis by constructing such invariant sets, 
thereby turning the condition into a stability certificate for both autonomous and non-autonomous operation.

\begin{remark}
Table~\ref{tab:comparison_energy_contraction} summarizes the comparison between classical Lyapunov direct methods and the proposed method.
\end{remark}


\begin{table}[t]
\vspace{-6pt}
\centering
\caption{Conceptual comparison with classical methods.}
\vspace{-3pt}
\label{tab:comparison_energy_contraction}
\begin{tabular}{c c c}
\toprule
\midrule
\textbf{Aspect} &  \textbf{Lyapunov direct method} & \textbf{Proposed} \\
\midrule
Object & State-to-equilibrium & Trajectory-to-trajectory \\
Equilibrium & Precomputed & Not prescribed \\
Symmetry & Reference fixing & $R$-projection \\
Time variation & Updated certificates & Unified framework \\
Linear analogy & Local linearization & Trajectory Jacobian \\
\bottomrule
\vspace{-20pt}
\end{tabular}
\end{table}


\section{Stability Assessment and Dynamic Performance Certification}
Theorem~\ref{thm:semi_contraction} provides a contraction-based condition on $\mathcal D$, 
whose practical use requires a forward-invariant set where the condition remains valid. 
This section converts the condition into computable stability certificates through invariant tubes in the projected state space. 
The method is then specialized to autonomous operation for equilibrium-free nonlinear stability assessment and 
to non-autonomous operation for tracking and robustness bounds under time-varying injections.

\vspace{-10pt}
\subsection{General Invariant-Tube Certificate}
We construct an invariant tube around an arbitrary reference trajectory, 
with its mismatch quantified by a residual term.

\begin{theorem}[Forward-Invariant Tube Under Semi-Contraction] \label{thm:invariant_tube}
Consider $\dot x=f(x,t)$, semi-contracting on a convex domain $\mathcal D$ under $\|\cdot\|_R$ with rate $c>0$.
Let $x_c(t)\in\mathcal D$ be an absolutely continuous reference trajectory.
Define the residual vector $d(t) := f(x_c(t),t) - \dot{x}_c(t)$ and its seminorm $r(t) := \|d(t)\|_R$. 
Let the scalar $\rho(t)$ satisfy
$\dot\rho(t)=-c\rho(t)+r(t)$ with 
$\rho(0)\ge \inf_{v\in\ker(R)}\|x(0)-(x_c(0)+v)\|_R$.
Then
\begin{equation}
    \inf_{v\in\ker(R)} \|x(t) - (x_c(t)+v)\|_R \le \rho(t), \quad \forall t \ge 0.
\end{equation}
Consequently, the time-varying tube
\begin{equation}
    \mathcal{T}(t) := \Bigl\{ x \;\Big|\; \inf_{v \in \ker(R)} \|x - (x_c(t) + v)\|_R \le \rho(t) \Bigr\}
\end{equation}
is forward-invariant.
\end{theorem}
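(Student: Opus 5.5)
The plan is a Dini-derivative estimate on the projected error, closed with a scalar comparison lemma. The first step removes the infimum. Since $\|\cdot\|_R$ vanishes on $\ker(R)$ and is a seminorm, $\|w+v\|_R=\|w\|_R$ for every $v\in\ker(R)$. Hence
\[
\inf_{v\in\ker(R)}\|x-(x_c+v)\|_R=\|x-x_c\|_R=\|R(x-x_c)\|_2 .
\]
So it suffices to bound the scalar $e(t):=\|R(x(t)-x_c(t))\|_2$, with $e(0)\le\rho(0)$ by hypothesis.

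Next I will derive a differential inequality for $e$. Set $w:=x-x_c$. Then
\[
\dot w = f(x,t)-f(x_c,t)+d(t).
\]
Since $\mathcal D$ is convex and both $x(t)$ and $x_c(t)$ lie in $\mathcal D$, the mean value theorem in integral form gives $f(x,t)-f(x_c,t)=\bar J(t)\,w$, where $\bar J:=\int_0^1 J(x_c+s w,t)\,ds$. I then use kernel invariance from Section~III-A. Because $J\ker(R)\subseteq\ker(R)$, we have $RJ(I-\Pi)=0$, so $RJ=RJR^\top R$, and the same holds for $\bar J$. Therefore
\[
\tfrac{d}{dt}(Rw)=J_R^{\bar{}}\,(Rw)+Rd, \qquad J_R^{\bar{}}:=R\bar J R^\top=\int_0^1 J_R\,ds .
\]
By the standard Dini-derivative bound for the Euclidean norm and subadditivity/convexity of $\mu_2$,
\[
D^+e\le \mu_2\bigl(J_R^{\bar{}}\bigr)e+\|Rd\|_2\le \sup_{s}\mu_2\bigl(J_R(x_c+sw,t)\bigr)\,e+r(t)\le -c\,e+r(t).
\]
The last inequality uses $\mu_R(J)=\mu_2(J_R)\le -c$ from Theorem~\ref{thm:semi_contraction}, i.e., Definition~\ref{def:semi_contraction}.

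Third, I compare $e$ with $\rho$. The function $\rho$ solves $\dot\rho=-c\rho+r$ with $\rho(0)\ge e(0)$. By the comparison lemma (or Grönwall applied to $\rho-e$, whose upper Dini derivative satisfies $D^+(e-\rho)\le -c(e-\rho)$), we get $e(t)\le\rho(t)$ for all $t\ge0$. Explicitly,
\[
e(t)\le e^{-ct}e(0)+\int_0^t e^{-c(t-s)}r(s)\,ds\le\rho(t).
\]
This is the claimed bound. For forward invariance of $\mathcal T(t)$, I apply the same argument from an arbitrary initial time $t_0$ with $x(t_0)\in\mathcal T(t_0)$, where $\rho(t_0)$ dominates the initial distance. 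The bound then propagates to all $t\ge t_0$.

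The main obstacle is that the estimate requires the whole segment between $x(t)$ and $x_c(t)$ to lie in $\mathcal D$. Convexity only provides this once $x(t)\in\mathcal D$, and nothing in the hypotheses alone forces $x(t)$ to stay in $\mathcal D$. I would handle this with a first-exit-time argument: the inequality holds on $[0,\tau)$, where $\tau$ is the exit time of $x$ from $\mathcal D$. One then states, as the paper implicitly does, that the tube is understood within $\mathcal D$, i.e., assume $\mathcal T(t)\subseteq\mathcal D$. Under that assumption, a continuity argument shows $\tau=\infty$: up to $\tau$ the state stays in $\mathcal T(t)\subseteq\mathcal D$, so it cannot exit. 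I would make this proviso explicit, since Section~IV is meant to construct such subsets. A secondary technicality is that $x_c$ is only absolutely continuous. This is handled by working with almost-everywhere derivatives and the integral form of the comparison lemma.
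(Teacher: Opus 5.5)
Your proposal is correct and follows the same route as the paper: the integral mean-value theorem along the segment from $x_c(t)$ to $x(t)$, the Dini-derivative bound $D^+\|x-x_c\|_R\le -c\|x-x_c\|_R+r(t)$, the comparison lemma, and the fact that $\|\cdot\|_R$ vanishes on $\ker(R)$ to remove the infimum (you do this first, the paper does it at the end). You also make explicit two points the paper leaves implicit. First, kernel invariance is what reduces the seminorm estimate to $\mu_2(RJR^\top)$. Second, the segment lies in $\mathcal D$ only if $x(t)$ does, which the paper simply asserts. Your exit-time fix under $\mathcal T(t)\subseteq\mathcal D$ is sound, but the continuity step is immediate only when $\mathcal T(t)$ lies in the interior of $\mathcal D$; otherwise you need a short Gr\"onwall argument using local Lipschitz continuity of $J$ and convexity of $\mathcal D$.
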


\begin{proof}
By the mean-value theorem~\cite[Appx.~A]{khalil2002nonlinear},
\begin{equation*}
\begin{aligned}
    &f(x(t),t)-f(x_c(t),t) \\
    &\quad = \Bigl(\int_0^1 Df\bigl(x_c(t)+\alpha(x(t)-x_c(t)),t\bigr)\,d\alpha\Bigr)\bigl(x(t)-x_c(t)\bigr),
\end{aligned}
\end{equation*}
where $x_c(t)+\alpha(x(t)-x_c(t))\in\mathcal D$ for all $\alpha\in[0,1]$. 
Thus, semi-contraction on $\mathcal D$ gives
\begin{equation*}
\begin{aligned}
    D^+ \|x(t) - x_c(t)\|_R &\le -c\|x(t) - x_c(t)\|_R + \|d(t)\|_R \\
     & = -c\|x(t) - x_c(t)\|_R + r(t). 
\end{aligned}
\end{equation*}
where $D^+$ denotes the upper right-hand derivative~\cite[Appx.~C.2]{khalil2002nonlinear}.
With $\rho(0)\ge \inf_{v\in\ker(R)}\|x(0)-(x_c(0)+v)\|_R$, the comparison lemma~\cite[Lem.~3.4]{khalil2002nonlinear} yields
\[
\|x(t)-x_c(t)\|_R\le \rho(t), \quad \forall t\ge 0.
\]
Since $\|\cdot\|_R$ vanishes on $\ker(R)$, taking the infimum yields
\[
\inf_{v\in\ker(R)}\|x(t)-(x_c(t)+v)\|_R\le \rho(t), \quad \forall t\ge 0.
\]
Hence, $\mathcal T(t)$ is forward-invariant.
\end{proof}

Theorem~\ref{thm:invariant_tube} constructs an invariant tube around $x_c(t)+\ker(R)$.
The reference $x_c(t)$ does not need to satisfy the exact dynamics, as is often the case with an ideal power dispatch command. 
The residual $d(t)$ quantifies this mismatch and is propagated through 
the scalar comparison system $\dot\rho=-c\rho+r(t)$ to bound projected deviations.
We next specialize this certificate to autonomous and non-autonomous regimes.

\vspace{-10pt}
\subsection{Autonomous Operation: Equilibrium-Free Nonlinear Stability}

We first specialize Theorem~\ref{thm:invariant_tube} to autonomous operation with constant power references.
Here, a constant reference $x_c\in\mathcal D$ defines a fixed invariant tube around $x_c+\ker(R)$.

\begin{corollary}[Contraction Stability in Autonomous Regime]
\label{cor:autonomous_tube}
Consider the autonomous system $\dot x=f(x)$, semi-contracting on a convex domain $\mathcal D$ under $\|\cdot\|_R$ with rate $c>0$.
Let $x_c\in\mathcal D$ be a constant reference, and define the residual $r_{\mathrm{res}}:=\|f(x_c)\|_R $.
For any radius $\bar r\ge r_{\mathrm{res}}/c$, let
\begin{equation}
    \mathcal T_{\bar r}(x_c):=
    \left\{
    x\;\middle|\;
    \inf_{v\in\ker(R)}
    \|x-(x_c+v)\|_R \le \bar r
    \right\}.
\end{equation}
Then any $\mathcal T_{\bar r}(x_c)\subseteq\mathcal D$ is forward-invariant.
Moreover, it contains an equilibrium manifold $x^*+\ker(R)$ satisfying
\begin{equation}
    \inf_{v\in\ker(R)}
    \|x_c-(x^*+v)\|_R
    \le
    \frac{r_{\mathrm{res}}}{c}.
\end{equation}
\end{corollary}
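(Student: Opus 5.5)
The proof has two parts: forward invariance, which follows from Theorem~\ref{thm:invariant_tube}, and existence of an equilibrium manifold, which needs a separate argument in the quotient space. For the first part, I would take the constant reference $x_c(t)\equiv x_c$. Then $\dot x_c=0$, the residual is $d(t)=f(x_c)$, and $r(t)=r_{\mathrm{res}}$ is constant. Choose $x(0)\in\mathcal T_{\bar r}(x_c)$ and set $\rho(0)=\bar r$. Since $\bar r\ge r_{\mathrm{res}}/c$, the solution $\rho(t)=r_{\mathrm{res}}/c+(\bar r-r_{\mathrm{res}}/c)e^{-ct}$ of $\dot\rho=-c\rho+r_{\mathrm{res}}$ is nonincreasing and stays at most $\bar r$. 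Theorem~\ref{thm:invariant_tube} then gives $\inf_{v}\|x(t)-(x_c+v)\|_R\le\rho(t)\le\bar r$.

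One subtlety is that Theorem~\ref{thm:invariant_tube} uses segments between $x(t)$ and $x_c$, which must lie in $\mathcal D$. I would handle this with a first-exit-time argument. Let $t^\star$ be the first time the trajectory leaves $\mathcal T_{\bar r}(x_c)\subseteq\mathcal D$. On $[0,t^\star]$ the trajectory stays in the convex set $\mathcal D$, so the bound holds there. At $t^\star$ the seminorm distance satisfies $\rho(t^\star)\le\bar r$ with strict inequality when $\bar r>r_{\mathrm{res}}/c$, which contradicts exiting. The boundary case $\bar r=r_{\mathrm{res}}/c$ then follows by taking the limit over $\bar r'\downarrow\bar r$, or directly from $\rho\equiv\bar r$.

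For the second part, I would pass to the quotient $\mathbb R^{2N}/\ker(R)$, identified with the coordinates $y=Rx\in\mathbb R^{2N-1}$. Rotational symmetry gives $f(x+v)=f(x)$ for $v\in\ker(R)$, so $Rf$ depends only on $Rx$ and the projected dynamics $\dot y=g(y):=Rf(R^\top y+\Pi^\perp x)$ are well defined. The tube becomes the closed Euclidean ball $\{y:\|y-Rx_c\|_2\le\bar r\}$, which is compact, convex and forward-invariant, and on which $g$ contracts with rate $c$ in $\|\cdot\|_2$ by Theorem~\ref{thm:semi_contraction}.

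I would then apply the standard contraction fixed-point argument. The time-$T$ flow map $\Phi_T$ maps the ball into itself and is a strict contraction with factor $e^{-cT}$ by Corollary~\ref{cor:trajectory}(i). Banach's theorem gives a unique fixed point $y^\star$ of $\Phi_T$. Since $\Phi_T$ commutes with every $\Phi_s$, uniqueness implies $\Phi_s(y^\star)=y^\star$ for all $s$, so $g(y^\star)=0$. Lifting back gives a point $x^*$ with $Rf(x^*)=0$.

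The main obstacle is upgrading $Rf(x^*)=0$ to an actual equilibrium. This condition only says $f(x^*)\in\ker(R)$, so all angles may still rotate at a common frequency; what we obtain is a synchronous motion along $x^*+\ker(R)$. I would read "equilibrium manifold" in this projected sense, as an invariant set $x^*+\ker(R)$ on which the dynamics reduce to uniform rotation. This is consistent with the paper's seminorm framework, and I would state the interpretation explicitly.

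The distance bound then comes from evaluating the tube estimate at the fixed point. Starting from $y^\star$ inside the tube with initial radius $\rho(0)=\bar r$, the trajectory stays at $y^\star$ for all $t$, so $\|y^\star-Rx_c\|_2\le\rho(t)$ for every $t$. Letting $t\to\infty$ gives $\|y^\star-Rx_c\|_2\le r_{\mathrm{res}}/c$. Since $\|y^\star-Rx_c\|_2=\inf_{v}\|x_c-(x^*+v)\|_R$, this is the claimed bound.
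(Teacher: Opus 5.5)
Your proof is correct, but the existence step takes a different route from the paper's.

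The paper gets forward invariance by invoking Theorem~\ref{thm:invariant_tube} with $x_c(t)\equiv x_c$. It then applies Brouwer's fixed-point theorem to the compact convex projected tube of radius exactly $r_{\mathrm{res}}/c$, so the distance bound comes for free from where the fixed point lies. Uniqueness is asserted in a separate sentence, from the strict decay of transverse distances.

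You instead pass to $y=Rx$ and apply Banach's theorem to the time-$T$ flow map on the ball of radius $\bar r$. Commutation of $\Phi_T$ with $\Phi_s$ upgrades the fixed point to a zero of $g$. You then shrink the bound to $r_{\mathrm{res}}/c$ by letting $t\to\infty$ in the tube estimate along the constant trajectory.

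What each buys:
\begin{itemize}
\item Brouwer needs only continuity and a compact convex invariant set. Applied literally to a flow map, however, it gives a $T$-periodic point. Turning that into an equilibrium needs an extra step, either a $T\downarrow 0$ compactness limit or exactly your contraction-based commutation argument.
\item Banach uses the contraction fully. It gives existence, the equilibrium property and uniqueness in the ball at once, and needs only completeness, not compactness.
\end{itemize}

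Your observation that one only obtains $Rf(x^*)=0$ is well taken. This describes a uniformly rotating synchronous motion on $x^*+\ker(R)$, not $f(x^*)=0$. The paper's Brouwer argument on the projected set yields the same object, so your interpretation matches what is actually proved. Your first-exit argument also addresses something the paper passes over: Theorem~\ref{thm:invariant_tube} needs $x(t)\in\mathcal D$ for its mean-value step.

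Two small loose ends remain in that first-exit argument.
\begin{itemize}
\item It does not cover $t^\star=0$ with $x(0)$ on the boundary, since $\rho(0)=\bar r$ there and the inequality is not strict.
\item For $\bar r=r_{\mathrm{res}}/c$, neither of your fixes works as stated. The limit $\bar r'\downarrow\bar r$ needs $\mathcal T_{\bar r'}\subseteq\mathcal D$ for some $\bar r'>\bar r$, which is not assumed. ``Directly from $\rho\equiv\bar r$'' is circular, because the comparison estimate is only available while the trajectory stays in $\mathcal D$.
\end{itemize}
One observation closes both. On the sphere $\|y-Rx_c\|_2=\bar r$, with the segment to the center inside the ball $\subseteq R\mathcal D$, the one-sided Lipschitz bound gives $\langle g(y),y-Rx_c\rangle\le \bar r\,(r_{\mathrm{res}}-c\bar r)\le 0$. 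So the locally Lipschitz field $g$ is subtangential to the closed ball. Nagumo's theorem then gives forward invariance for every $\bar r\ge r_{\mathrm{res}}/c$, with no exit-time bookkeeping.
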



Corollary~\ref{cor:autonomous_tube}, proved in Appendix~\ref{app:proof_cor_autonomous}, 
provides an equilibrium-free stability assessment for autonomous systems without disturbances.
Instead of solving for the exact equilibrium, the stability region can be certified from any physically meaningful reference $x_c$, 
ensuring bounded trajectories that converge to the equilibrium manifold.
When $x_c$ is an exact equilibrium, the residual vanishes,  
and the result reduces to the classical equilibrium-known ROA estimation, 
demonstrating the generalization of conventional stability analysis.


\vspace{-10pt}
\subsection{Non-Autonomous Operation: Tracking and Robustness Bounds}

We next consider non-autonomous operation under time-varying power injections.
As the operating condition continuously drifts, 
the stability objective shifts from asymptotic convergence to bounded projected deviations.
Building on Theorem~\ref{thm:invariant_tube}, 
we address two practical regimes: tracking of a moving quasi-steady manifold under slow variations, 
and robustness around a nominal manifold under fast perturbations.

\subsubsection{Slow-Varying Injections}

We first consider slowly varying injections that gradually shift the operating condition, 
such as renewable fluctuations and load ramps.
Motivated by frozen-equilibrium analysis~\cite{Ru2024Slowly}, 
we take the quasi-steady manifold induced by the instantaneous input as the reference.
The following result bounds the projected tracking error relative to this moving manifold.

\begin{corollary}[Quasi-Steady Tracking Bounds]\label{cor:slow_varying}

Consider $\dot{x}=f(x,u(t))$, semi-contracting on $\mathcal D$ under $\|\cdot\|_R$ with rate $c>0$.
Let $\mathcal U$ be an admissible input set on which there exists a continuously differentiable quasi-steady map
$x^*(u):\mathcal U\to\mathcal D$ satisfying
\begin{equation}
    \|f(x^*(u), u)\|_R = 0, \qquad \left\| \frac{\partial x^*}{\partial u}(u) \right\|_R \le H,
\end{equation}
for all $u\in\mathcal U$, where $H>0$ is a uniform sensitivity bound.

Under the slow-variation condition $\|\dot{u}(t)\| \le \epsilon$, 
the trajectory relative to
$x_c(t):=x^*(u(t))$ satisfies, for all $t\ge 0$,
\begin{multline} \label{eq:slow_tracking_transient}
 \inf_{v\in\ker(R)}\|x(t)-(x_c(t)+v)\|_R \\ \le 
 e^{-ct}\rho(0)+\frac{H\epsilon}{c}\bigl(1-e^{-ct}\bigr)
\end{multline}
where $\rho(0) \ge \inf_{v\in\ker(R)} \|x(0) - (x_c(0) + v)\|_R$. 
Consequently, the ultimate bound is given by
\begin{equation} \label{eq:slow_tracking_ultimate}
    \limsup_{t\to\infty} \inf_{v\in\ker(R)} \|x(t) - (x_c(t) + v)\|_R \le \frac{H\epsilon}{c}.
\end{equation}
\end{corollary}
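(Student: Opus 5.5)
The plan is to specialize Theorem~\ref{thm:invariant_tube} with the reference trajectory $x_c(t):=x^*(u(t))$, bound the residual seminorm $r(t)$ uniformly by $H\epsilon$, and then solve the scalar comparison system explicitly.

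\emph{Admissibility of the reference.} Since $u(t)\in\mathcal U$, $x^*$ is $C^1$, and $u$ is absolutely continuous (it has a bounded derivative), the reference $x_c(t)$ is absolutely continuous. It also takes values in $\mathcal D$ because $x^*$ maps $\mathcal U$ into $\mathcal D$. Theorem~\ref{thm:invariant_tube} therefore applies to the non-autonomous vector field $f(x,u(t))$. The Jacobian in $x$ does not depend on $u$, since the references enter additively, so semi-contraction holds uniformly in $t$.

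\emph{Residual bound.} The residual is $d(t)=f(x^*(u(t)),u(t))-\dot x_c(t)$. By the quasi-steady property, $\|f(x^*(u),u)\|_R=0$, so the first term contributes nothing in seminorm. Using the triangle inequality for the seminorm,
\[
r(t)=\|d(t)\|_R\le \|f(x^*(u(t)),u(t))\|_R+\|\dot x_c(t)\|_R=\Bigl\|\tfrac{\partial x^*}{\partial u}(u(t))\,\dot u(t)\Bigr\|_R .
\]
This is bounded by $\bigl\|\tfrac{\partial x^*}{\partial u}\bigr\|_R\,\|\dot u(t)\|\le H\epsilon$, where I read the sensitivity bound as the induced seminorm, i.e.\ the operator norm from $\|\cdot\|$ on inputs to $\|\cdot\|_R$ on states.

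\emph{Comparison system.} Take $\rho$ solving $\dot\rho=-c\rho+r(t)$ with the given $\rho(0)$. Variation of constants gives
\[
\rho(t)=e^{-ct}\rho(0)+\int_0^t e^{-c(t-s)}r(s)\,ds\le e^{-ct}\rho(0)+\frac{H\epsilon}{c}\bigl(1-e^{-ct}\bigr).
\]
Theorem~\ref{thm:invariant_tube} then yields \eqref{eq:slow_tracking_transient}. Letting $t\to\infty$ gives \eqref{eq:slow_tracking_ultimate}.

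\emph{Main obstacle.} The delicate point is a hypothesis of Theorem~\ref{thm:invariant_tube}: the segment between $x(t)$ and $x_c(t)$ must remain in $\mathcal D$ for all $t$. I would handle this by assuming, as in the autonomous corollary, that the tube of radius $\max\{\rho(0),H\epsilon/c\}$ around $x_c(t)+\ker(R)$ lies in $\mathcal D$. Forward invariance of that tube, via a standard first-exit-time contradiction argument, then keeps the trajectory in $\mathcal D$, so the contraction estimate is valid throughout.
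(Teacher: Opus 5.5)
Your proposal is correct and follows the paper's proof essentially step for step: take $x_c(t)=x^*(u(t))$ in Theorem~\ref{thm:invariant_tube}, use $\|f(x^*(u),u)\|_R=0$ and the chain rule to get $r(t)\le H\epsilon$, and solve the comparison system by variation of constants. Your extra step assumes the tube of radius $\max\{\rho(0),H\epsilon/c\}$ lies in $\mathcal D$ and uses a first-exit argument, which makes explicit a requirement the paper leaves implicit in the proof of Theorem~\ref{thm:invariant_tube}: the trajectory must stay in $\mathcal D$.
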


Corollary~\ref{cor:slow_varying}, proved in Appendix~\ref{app:proof_cor_slow_varying}, 
shows that the tracking bound is shaped 
by the input variation rate $\epsilon$, the semi-contraction rate $c$, and the input sensitivity $H$.
Tracking improves with slower variations, stronger restoring dynamics, and lower input sensitivity.

The sensitivity $H$ serves as a critical indicator of stability margins.
By the implicit function theorem, $H$ scales with the inverse of the relevant power-flow Jacobian.
Near synchronization or voltage-stability limits, this Jacobian becomes ill-conditioned, increasing $H$ and degrading the bound $H\epsilon/c$ even when $c>0$.
The inflated bound therefore signals increased operating-point sensitivity and reduced stability margin.



\subsubsection{Fast Perturbations}
We next consider fast perturbations, such as rapid load variations, sensor noise, 
and intermittent renewable fluctuations. 
In this case, the goal is to establish a robustness bound on the deviation 
from the nominal stable manifold associated with a constant input.

\begin{corollary}[Robust Boundedness]
\label{cor:fast_robustness}
Consider $\dot{x}=f(x,u(t))$, semi-contracting on $\mathcal D$ under $\|\cdot\|_R$ with rate $c>0$.
For a nominal input $\bar u\in\mathcal U$, let $x^*(\bar u)\in\mathcal D$ satisfy
$\|f(x^*(\bar u),\bar u)\|_R=0$.
Consider a time-varying input $u(t)$ satisfying $\|u(t)-\bar u\|_2 \le \delta$ for all $t\ge 0$, 
and set $x_c(t):=x^*(\bar u)$.

Assume that the vector field is uniformly Lipschitz in the input at the nominal manifold, namely,
\begin{equation}
    \|f(x^*(\bar u),u)-f(x^*(\bar u),\bar u)\|_R
    \le
    L_u\|u-\bar u\|_2,
    \ \ \  \forall u\in\mathcal U.
\end{equation}
Then, for all $t \ge 0$, the deviation satisfies
\begin{multline} \label{eq:fast_robust_transient}
    \inf_{v\in\ker(R)} \|x(t) - (x_c(t) + v)\|_R  \\ \le 
    e^{-ct}\rho(0) + \frac{L_u\delta}{c}\big(1 - e^{-ct}\big).
\end{multline}
where $\rho(0) \ge \inf_{v\in\ker(R)} \|x(0) - (x_c(0) + v)\|_R$. 
Consequently, the ultimate bound is given by 
\begin{equation} \label{eq:fast_robust_ultimate}
    \limsup_{t\to\infty} \inf_{v\in\ker(R)} \|x(t) - (x_c(t) + v)\|_R \le \frac{L_u\delta}{c}.
\end{equation}
\end{corollary}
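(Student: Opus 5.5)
The plan is to specialize Theorem~\ref{thm:invariant_tube} with the constant reference $x_c(t):=x^*(\bar u)$, in the same way Corollary~\ref{cor:slow_varying} is obtained. The reference is constant, so $\dot x_c(t)=0$ and the residual is
\[
d(t)=f(x^*(\bar u),u(t))-\dot x_c(t)=f(x^*(\bar u),u(t)).
\]
Since $\|f(x^*(\bar u),\bar u)\|_R=0$, I will use the triangle inequality for the seminorm to write
\[
r(t)=\|d(t)\|_R\le \|f(x^*(\bar u),u(t))-f(x^*(\bar u),\bar u)\|_R+\|f(x^*(\bar u),\bar u)\|_R .
\]
The Lipschitz assumption together with $\|u(t)-\bar u\|_2\le\delta$ then gives the uniform residual bound $r(t)\le L_u\delta$ for all $t\ge0$. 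The assumption is applicable because $u(t)\in\mathcal U$ is implicit in the setup.

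Next, I will compare against the scalar system $\dot\rho=-c\rho+r(t)$ with the given $\rho(0)$. Its variation-of-constants solution is
\[
\rho(t)=e^{-ct}\rho(0)+\int_0^t e^{-c(t-s)}r(s)\,ds .
\]
Bounding $r(s)\le L_u\delta$ inside the integral yields $\rho(t)\le e^{-ct}\rho(0)+\frac{L_u\delta}{c}(1-e^{-ct})$. Theorem~\ref{thm:invariant_tube} then bounds the projected distance to the manifold by $\rho(t)$, which gives~\eqref{eq:fast_robust_transient}. Taking $\limsup_{t\to\infty}$ removes the transient term, since $c>0$, and gives~\eqref{eq:fast_robust_ultimate}.

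The main obstacle is the hypotheses of Theorem~\ref{thm:invariant_tube}. That theorem uses the mean-value integral along the segment from $x_c$ to $x(t)$, so this segment must lie in $\mathcal D$. By convexity it suffices that both endpoints do. The reference endpoint $x^*(\bar u)$ is in $\mathcal D$ by assumption. For the trajectory endpoint, I will state explicitly that the trajectory is assumed to remain in $\mathcal D$, as in Corollary~\ref{cor:trajectory}. Equivalently, it suffices that the tube of radius $\max\{\rho(0),L_u\delta/c\}$ around $x^*(\bar u)+\ker(R)$ lies in $\mathcal D$. With that assumption, the bound on $\rho$ shows the trajectory never leaves this tube, so the argument closes. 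A second subtlety is that the theorem bounds $\|x-x_c\|_R$ directly, while $\rho(0)$ is compared with an infimum over $\ker(R)$. I will handle this as in Corollary~\ref{cor:trajectory}. The seminorm vanishes on $\ker(R)$, and rotational symmetry makes every shifted point $x^*(\bar u)+v$, $v\in\ker(R)$, an equally valid reference. Applying the bound with $x_c$ replaced by $x^*(\bar u)+v$ and taking the infimum over $v$ gives the stated form.
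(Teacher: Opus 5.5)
Your proposal is correct and follows the paper's proof. The paper likewise takes the constant reference $x_c\equiv x^*(\bar u)$, reduces the residual to $\|f(x^*(\bar u),u(t))\|_R\le L_u\delta$, and solves the comparison system $\dot\rho=-c\rho+r(t)$ to get both bounds. Your explicit triangle-inequality step and the requirement that the trajectory stay in $\mathcal D$ only spell out what the paper leaves implicit, and the kernel-infimum issue is immediate because $\|x-(x_c+v)\|_R=\|x-x_c\|_R$ for every $v\in\ker(R)$.
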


Corollary~\ref{cor:fast_robustness}, proved in Appendix~\ref{app:proof_cor_fast}, 
provides a robustness guarantee against fast injection perturbations.
Unlike the slow-varying case, the objective is not to track a drifting quasi-steady manifold, 
but to confine trajectories near the nominal manifold $x^*(\bar u)+\ker(R)$.
The ultimate bound $L_u\delta/c$ is determined by the perturbation magnitude $\delta$, 
the vector field sensitivity $L_u$, and the semi-contraction rate $c$.
Thus, robustness improves with weaker perturbations, lower sensitivity, and stronger restoring dynamics.






\begin{figure}[t]
\centering
\includegraphics[width=0.81\linewidth]{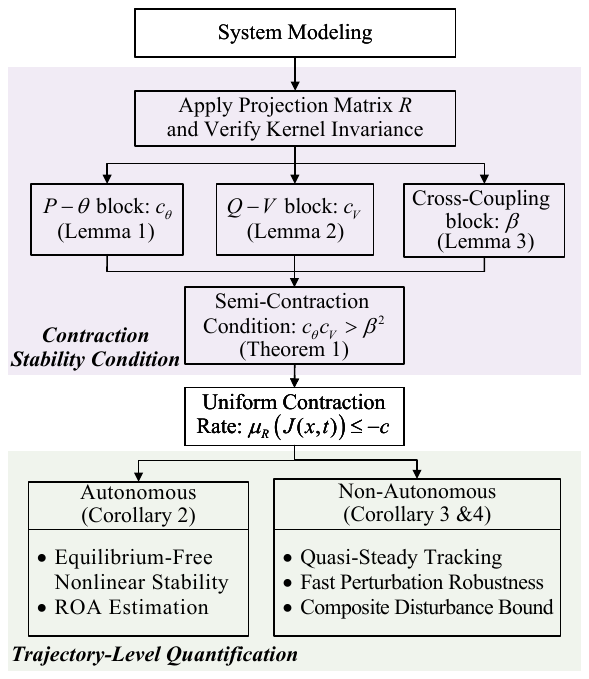}
\vspace{-5pt} 
\caption{Unified framework for contraction stability analysis.}
\label{fig:flow}
\vspace{-4pt}
\end{figure}

\begin{remark}[Unified Bounds for Composite Disturbances] \label{rem:unification} 
In practice, time-varying injections often combine slow variations and fast perturbations,
i.e., $u(t)=\bar u(t)+\tilde u(t)$, with
$\|\dot{\bar u}(t)\|_2\le \epsilon$ and $\|\tilde u(t)\|_2\le \delta$.

Considering $x^*(\bar u(t))$ as the moving reference, the residual contains additive slow- and fast-varying contributions, yielding
\begin{multline}
    \inf_{v\in\ker(R)}
    \|x(t)-(x^*(\bar u(t))+v)\|_R
    \\
    \le
    e^{-ct}\rho(0)
    +
    \frac{H\epsilon+L_u\delta}{c}\bigl(1-e^{-ct}\bigr),
\end{multline}
with the corresponding ultimate bound
\begin{equation*} 
    \limsup_{t\to\infty} \inf_{v\in\ker(R)} \|x(t)-(x^*(\bar{u}(t))+v)\|_R \le \frac{H\epsilon+L_u\delta}{c}.
\end{equation*}
Thus, the bound provides a unified certificate for projected deviations under realistic composite disturbances.
\end{remark} 

This section converts the contraction stability analysis into explicit trajectory-level certificates in the projected space.
It provides an equilibrium-free stability characterization for autonomous operation, 
together with tracking and robustness bounds for non-autonomous operation under slow and fast injection fluctuations.
The overall analytical framework is summarized in Fig.~\ref{fig:flow}.


\section{Case Studies}
The proposed method is validated on a modified IEEE 9-bus system implemented in MATLAB/Simulink. 
After Kron elimination of non-generator buses, the system is reduced to 3-bus equivalent network shown in Fig.~\ref{fig:3bus}. 
Despite its compact scale, the reduced system retains the key mechanisms considered in the analysis, 
including rotational symmetry, lossy network effects, heterogeneous droop settings, and coupled angle-voltage dynamics.

Each bus is connected to a droop-controlled converter and interconnected through equivalent lossy impedances $R_{ik}+jX_{ik}$.
All quantities are in per unit (p.u.) unless otherwise specified, 
with parameters listed in Table~\ref{tab:3bus_param}.
The admissible domain is set as $V_i\in[0.95,1.05]$ p.u. and $\gamma_{\max}=20^\circ$. 
The bounds computed over this domain yield $c_\theta$, $c_V$, and $\beta$, with $c_\theta c_V>\beta^2$. 
Thus, Theorem~\ref{thm:semi_contraction} certifies uniform semi-contraction on $\mathcal D$ with rate $c=0.184$.

\begin{figure}[t]
\vspace{2pt}
\centering
\includegraphics[width=0.7\linewidth]{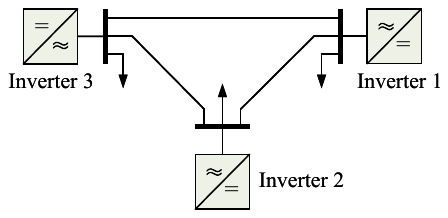}
\vspace{-8pt} 
\caption{Reduced 3-bus converter network derived from the modified IEEE 9-bus system.}
\label{fig:3bus}
\vspace{-16pt}
\end{figure}

\begin{table}[t]
\vspace{-6pt}
\centering
\caption{Parameters of the reduced 3-bus equivalent system.}
\vspace{-4pt}
\label{tab:3bus_param}
\begin{tabular}{c c c}
\toprule
\midrule
\textbf{Category} & \textbf{Parameter} & \textbf{Value} \\
\midrule

Base    &Apparent power base $S_b$ & 10~MVA \\
        &Voltage base $V_b$ (line RMS) &10~kV \\
        &Frequency base $f_b$ & 50~Hz \\
\addlinespace

Control &Active power droop $m_{p,i}$ & $[0.040,\,0.035,\,0.050]$   \\
        &Reactive power droop $n_{q,i}$ & $[0.020,\,0.020,\,0.015]$ \\
        &Voltage time constants $\tau_{v,i}$ & $[0.70,\,0.80,\,0.11]$~s\\
\addlinespace

Network & Equivalent impedance $Z_{12}$ & $0.08 + j0.11$ \\
        & Equivalent impedance $Z_{13}$ & $0.08 + j0.11$ \\
        & Equivalent impedance $Z_{23}$ & $0.10 + j0.12$ \\

\bottomrule
\end{tabular}
\end{table}

\vspace{-10pt}
\subsection{Autonomous Operation With Constant Power References}
\begin{figure}[t]
\centering
\includegraphics[width=0.92\linewidth]{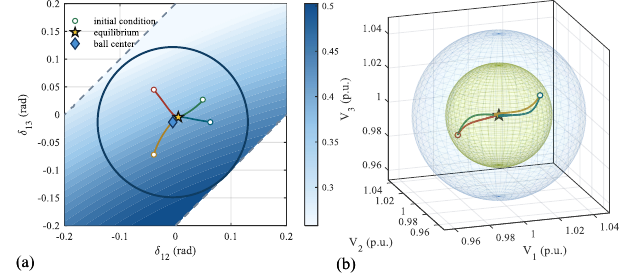}
\vspace{-6pt} 
\caption{Invariant-set verification for the autonomous case.  
(a) Invariant ball in the angle-difference subspace, constructed around a selected reference point (blue diamond).
(b) Invariant shells in the voltage subspace.}
\label{fig:auto_simu1}
\vspace{-12pt} 
\end{figure}

\begin{figure}[t]
\centering
\includegraphics[width=0.95\linewidth]{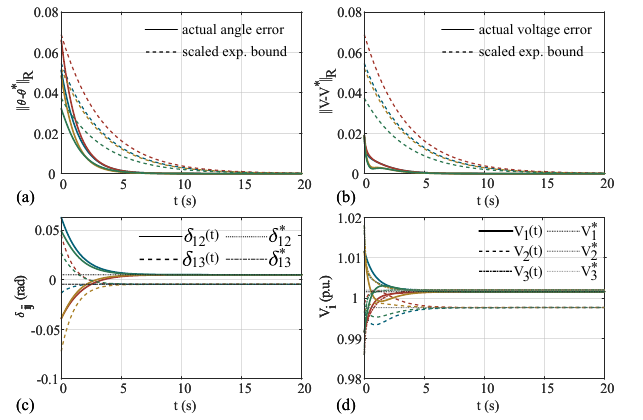}
\vspace{-6pt} 
\caption{Time-domain verification for the autonomous case. 
(a), (b) Projected angle and voltage errors for multiple initial conditions, 
with trajectories (solid) bounded by the exponential envelopes (dashed) with rate $c = 0.184$. 
(c), (d) Angle differences and voltage magnitudes converging to the stable manifold.}
\label{fig:auto_simu2}
\vspace{-4pt}
\end{figure}
To validate the proposed equilibrium-free analysis under autonomous operation, we apply constant power references 
$P_i^{\mathrm{ref}}=[0.25,\,0.20,\,0.25]$ and 
$Q_i^{\mathrm{ref}}=[0.10,\,-0.10,\,0.10]$. 
This case verifies the regional stability result in Corollary~\ref{cor:autonomous_tube} 
and the projected convergence property in Corollary~\ref{cor:trajectory}.

Fig.~\ref{fig:auto_simu1}(a) shows a two-dimensional slice of the invariant set in the angle-difference subspace. 
Following Corollary~\ref{cor:autonomous_tube}, the set is constructed around a selected reference point (blue diamond) by evaluating its residual, 
without prior knowledge of the exact equilibrium.
Similarly, Fig.~\ref{fig:auto_simu1}(b) shows the corresponding invariant shells in the voltage subspace. 
The outer blue shell represents the maximal forward-invariant set derived from theoretical bounds, 
whereas the inner yellow-green shell represents the minimal forward-invariant set determined by multiple initial conditions. 
All trajectories remain within the certified set and converge to the equilibrium (gold star).

Fig.~\ref{fig:auto_simu2} verifies the contraction behavior in the time domain. 
Quantified by $\|\theta-\theta^*\|_R$ and $\|V-V^*\|_R$, 
Figs.~\ref{fig:auto_simu2}(a) and (b) show that the trajectory distances to the final steady state 
strictly decay within the theoretical envelopes (dashed lines) with rate $e^{-0.184t}$. 
Figs.~\ref{fig:auto_simu2}(c) and (d) show convergence of angle differences and voltage magnitudes to the stable manifold from different initial states.

Collectively, these results validate the proposed equilibrium-free analysis in capturing 
both invariant-set geometry and convergence behavior under autonomous operation.

\begin{figure}[t]
\centering
\includegraphics[width=0.98\linewidth]{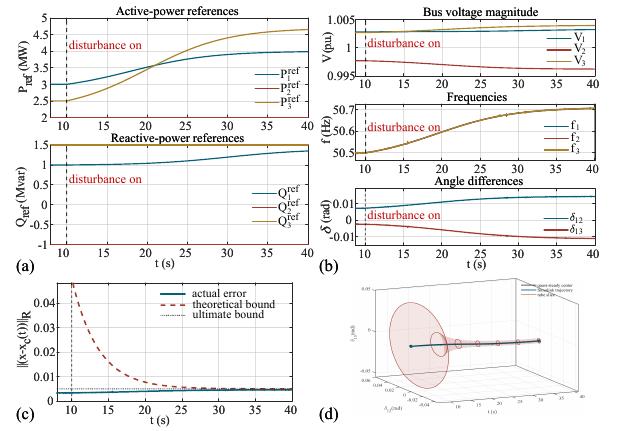}
\vspace{-6pt} 
\caption{Time-domain verification under slowly varying disturbances, with $\epsilon\approx0.014$. 
(a) Power references. 
(b) State responses after disturbance injection at $t=10\,\mathrm{s}$. 
(c) Projected actual error and theoretical bound. 
(d) Evolution of the invariant set in the angle-difference subspace.}
\label{fig:non_slow}
\vspace{-12pt}
\end{figure}
\begin{figure}[t]
\centering
\includegraphics[width=1\linewidth]{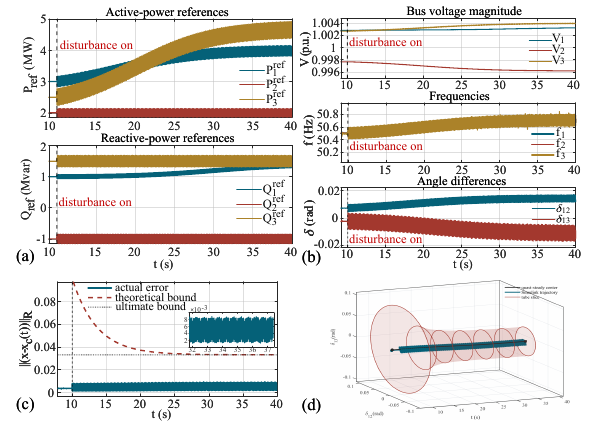}
\vspace{-23pt} 
\caption{Time-domain verification under composite disturbances, with $\epsilon\approx0.014$ and $\delta=0.04$.
(a) Power references. 
(b) State responses after disturbance injection at $t=10\,\mathrm{s}$. 
(c) Projected actual error and theoretical bound. 
(d) Evolution of the invariant set in the angle-difference subspace.}
\label{fig:non_mix}
\vspace{-10pt}
\end{figure}

\vspace{-8pt} 
\subsection{Non-Autonomous Operation Under Time-Varying Injections}
We next consider non-autonomous operation under time-varying power injections. 
The power references are set to 
$P_i^{\mathrm{ref}}=[0.30,\,0.20,\,0.25]$ and 
$Q_i^{\mathrm{ref}}=[0.10,\,-0.10,\,0.15]$ prior to the disturbance. 
Time-varying disturbances are applied at $t=10\,\mathrm{s}$.

\subsubsection{Slow-Varying Injections}
For the applied ramp disturbance, the corresponding parameters are 
$\epsilon=1.40\times10^{-2}$ and $H=0.059$. 
Figs.~\ref{fig:non_slow}(a) and (b) show the time-varying power references and the resulting state trajectories.
In Fig.~\ref{fig:non_slow}(c), the projected tracking error remains below the theoretical bound, 
verifying the enclosure predicted by Corollary~\ref{cor:slow_varying} around the moving quasi-steady manifold.
Fig.~\ref{fig:non_slow}(d) further illustrates the evolution of the corresponding invariant set in the angle-difference subspace.

\subsubsection{Composite Disturbances}
We next superimpose a fast perturbation on the slow-varying reference. 
For the fast component, the disturbance amplitude is set to $\delta=0.04$ relative to the reference power, 
with a corresponding Lipschitz constant of $L_u=0.132$.
The resulting bound becomes $(H\epsilon+L_u\delta)/c\approx3.31\times10^{-2}$. 
Specifically, Figs.~\ref{fig:non_mix}(a) and (b) show the composite power references and the resulting system responses, 
while Figs.~\ref{fig:non_mix}(c) and (d) validate the unified bound in Remark~\ref{rem:unification}.

Although more conservative in the composite case, the bound still provides a rigorous robustness certificate, 
validating the method under time-varying injections.

\section{Conclusion}
This paper proposed an equilibrium-free contraction stability framework for 
GFM converter-based microgrids with high renewable penetration.
By projecting out the rotational symmetry and bounding the coupled angle-voltage dynamics, 
we derived a computable semi-contraction condition over an admissible domain. 
The condition was further converted into forward-invariant certificates, yielding 
trajectory-level performance guarantees in the projected space.
For autonomous operation, the method provides 
an equilibrium-free nonlinear stability characterization that extends classical ROA estimation.
For non-autonomous operation, it establishes explicit tracking and robustness bounds under different power disturbances.
The case study validated the proposed stability results.

Future work will extend the proposed method to larger interconnected systems 
with synchronous generators and diverse converter models. 
Efficient online implementations will also be developed for stability assessment and control design.



\appendices
\section{Analytical Expressions of the Jacobian Blocks}
\label{app:jacobian_derivation}

Recall the droop dynamics
\begin{subequations}
\begin{align}
    \dot{\theta}_i &= \omega_{\mathrm{nom}} - m_{p,i} \big( P_i(x) - P_{i,0}^{\mathrm{ref}}(t) \big), \\
    \tau_{v,i} \dot{V}_i &= V_{\mathrm{nom},i} - V_i - n_{q,i} \big( Q_i(x) - Q_{i,0}^{\mathrm{ref}}(t) \big).
\end{align}
\end{subequations}
For brevity, define the pairwise AC power-flow coupling terms from node $i$ to node $k$ as:
\begin{subequations}
\begin{align}
    P_{ik} &:= V_i V_k (G_{ik} \cos\theta_{ik} + B_{ik} \sin\theta_{ik}), \\
    Q_{ik} &:= V_i V_k (G_{ik} \sin\theta_{ik} - B_{ik} \cos\theta_{ik}).
\end{align}
\end{subequations}
The resulting Jacobian subblocks $J_{\theta\theta}$, $J_{\theta V}$, $J_{V\theta}$, and $J_{VV}$ are computed as follows.

\subsubsection{Active Power--Angle Block ($J_{\theta\theta}$)}
For $i\neq k$,
\begin{align*}
    [J_{\theta\theta}]_{ik} &= -m_{p,i}Q_{ik},&
    [J_{\theta\theta}]_{ii} &= m_{p,i}\sum_{k\neq i}Q_{ik}.
\end{align*}

\subsubsection{Reactive Power--Voltage Block ($J_{VV}$)}
For $i\neq k$,
\begin{align*}
    [J_{VV}]_{ik} &= -\frac{n_{q,i}}{\tau_{v,i}}\frac{Q_{ik}}{V_k},&
    [J_{VV}]_{ii} &= -\frac{1}{\tau_{v,i}}
    -\frac{n_{q,i}}{\tau_{v,i}}\frac{Q_i-V_i^2B_{ii}}{V_i}.
\end{align*}

\subsubsection{Cross-Coupling Blocks ($J_{\theta V}$ and $J_{V\theta}$)}
For $i\neq k$,
\begin{equation*}
\begin{aligned}
[J_{\theta V}]_{ik}&=-\frac{m_{p,i}P_{ik}}{V_k},&
[J_{\theta V}]_{ii}&=-\frac{m_{p,i}(P_i+V_i^2G_{ii})}{V_i},\\
[J_{V\theta}]_{ik}&=\frac{n_{q,i}}{\tau_{v,i}}P_{ik},&
[J_{V\theta}]_{ii}&=-\frac{n_{q,i}}{\tau_{v,i}}\sum_{k\neq i}P_{ik}.
\end{aligned}
\end{equation*}

Hence, $\sum_k[J_{\theta\theta}]_{ik}=\sum_k[J_{V\theta}]_{ik}=0$, 
which gives $J(x,t)\mathbf{1}_{2N}^{\theta}=\mathbf{0}_{2N}$ and verifies kernel invariance.

\vspace{-6pt}
\section{Proofs of the Block Bounds}
\subsection{Proof of Lemma \ref{lem:bound_theta}}
\label{appendix:proof_theta_block}
\begin{proof}
Substituting~\eqref{eq:laplacian_decomposition} into the projected block yields
\begin{equation}
    \mathcal{S}_{\theta\theta}(x,t) = R_\theta \bigl( -L_{\mathrm{sym}}(x) + \Delta(x) \bigr) R_\theta^\top.
\end{equation}

By Weyl's inequality~\cite[Thm.~4.3.7]{horn2012matrix},
\begin{multline*}
    \lambda_{\max}\bigl(\mathcal{S}_{\theta\theta}(x,t)\bigr) \le  \\
    \lambda_{\max}\bigl(-R_\theta L_{\mathrm{sym}}(x) R_\theta^\top\bigr) + \lambda_{\max}\bigl(R_\theta \Delta(x) R_\theta^\top\bigr).
\end{multline*}

Since the rows of $R_\theta$ form an orthonormal basis of $\mathbf{1}_N^\perp$
and $L_{\mathrm{sym}}(x)\succeq \underline L$ on this subspace for all $x\in\mathcal D$, 
\begin{equation*}
    \lambda_{\max}\!\left(-R_\theta L_{\mathrm{sym}}(x)R_\theta^\top\right)
    \le -\lambda_2(\underline L).
\end{equation*}

For the second term, $\Delta(x)$ is diagonal, so its maximum eigenvalue is its largest diagonal element.
By eigenvalue interlacing property, for any $x\in\mathcal D$,
\begin{equation*}
    \lambda_{\max}\!\left(R_\theta\Delta(x)R_\theta^\top\right)
    \le \lambda_{\max}(\Delta(x))
    =\max_{i\in\mathcal V}\Delta_{ii}(x)
    \le\delta_\theta .
\end{equation*}
Combining the two bounds gives
\begin{equation}
    \lambda_{\max}(\mathcal S_{\theta\theta}(x,t))
    \le -\lambda_2(\underline L)+\delta_\theta
    =-c_\theta .
\end{equation}
This is negative under $\lambda_2(\underline L)>\delta_\theta$, completing the proof.
\end{proof}

\vspace{-10pt}
\subsection{Proof of Lemma~\ref{lem:bound_V}}
\label{appendix:proof_voltage_block}
\begin{proof}

From Appendix~\ref{app:jacobian_derivation}, the Gershgorin center of 
$\mathcal S_{VV}$ is
\begin{equation*}
    c_i(x)=-\frac{1}{\tau_{v,i}}
    +\frac{n_{q,i}}{\tau_{v,i}}V_iB_{ii}
    -\frac{n_{q,i}}{\tau_{v,i}}\frac{Q_i}{V_i}.
\end{equation*}
Expanding $Q_i$ gives the term 
$2n_{q,i}V_iB_{ii}/\tau_{v,i}$, whose algebraic maximum is attained at 
$V_i=\underline V$ since $B_{ii}<0$. Maximizing the remaining trigonometric terms over 
$\mathcal D$ yields $\bar c_i$.

For the off-diagonal entries, symmetrization gives
\begin{multline*}
    r_i(x)=\frac{1}{2}\sum_{k\neq i}\bigg|
    \left(\frac{n_{q,i}}{\tau_{v,i}}V_i+\frac{n_{q,k}}{\tau_{v,k}}V_k\right)
    B_{ik}\cos\theta_{ik} \\
    -
    \left(\frac{n_{q,i}}{\tau_{v,i}}V_i-\frac{n_{q,k}}{\tau_{v,k}}V_k\right)
    G_{ik}\sin\theta_{ik}
    \bigg|.
\end{multline*}
Bounding the voltage magnitudes by $\overline V$ and maximizing over 
$|\theta_{ik}|\le\gamma_{\max}$ yields the state-independent radius bound $\bar r_i$.

By Gershgorin's theorem~\cite[Thm.~6.1.1]{horn2012matrix},
\begin{equation}
    \lambda_{\max}\bigl(\mathcal S_{VV}(x,t)\bigr)
    \le \max_{i\in\mathcal V}(\bar c_i+\bar r_i)
    =-c_V .
\end{equation}
The assumption $\bar c_i+\bar r_i<0$ for all $i\in\mathcal V$ implies $c_V>0$, completing the proof.
\end{proof}

\vspace{-10pt}
\section{Proofs of the Invariant-Set Corollaries}
\subsection{Proof of Corollary~\ref{cor:autonomous_tube}}
\label{app:proof_cor_autonomous}
\begin{proof}

With the constant reference $x_c(t)\equiv x_c$, the residual is 
$r(t)\equiv r_{\mathrm{res}}$, and the comparison dynamics in 
Theorem~\ref{thm:invariant_tube} reduces to
$\dot{\rho}(t)=-c\rho(t)+r_{\mathrm{res}}$. Hence every radius
$\bar r\ge r_{\mathrm{res}}/c$ with 
$\mathcal T_{\bar r}(x_c)\subseteq\mathcal D$ defines a forward-invariant tube.

It remains to show that the invariant tube contains an equilibrium manifold. 
For radius $r_{\mathrm{res}}/c$, the projected set associated with 
$\mathcal T_{r_{\mathrm{res}}/c}(x_c)$ is compact, convex, and forward-invariant. 
By applying Brouwer's fixed-point theorem~\cite[Thm.~55.6]{munkres2000topology}, 
there exists at least one equilibrium $x^*$ in this set. 
By rotational symmetry, this gives the equilibrium manifold $x^*+\ker(R)$ such that
$\inf_{v\in\ker(R)}\|x_c-(x^*+v)\|_R\le r_{\mathrm{res}}/c$.

Finally, semi-contraction on $\mathcal D$ implies strict decay of transverse distances between any two trajectories, 
which precludes multiple distinct equilibrium manifolds.
\end{proof}

\vspace{-10pt}
\subsection{Proof of Corollary~\ref{cor:slow_varying}}
\label{app:proof_cor_slow_varying}
\begin{proof}
Since $\|f(x_c(t), u(t))\|_R = 0$, the residual in Theorem~\ref{thm:invariant_tube} simplifies to 
\begin{equation*}
    r(t)=\|f(x_c(t),u(t))-\dot x_c(t)\|_R=\|\dot x_c(t)\|_R.
\end{equation*} 
By the chain rule, 
$\dot x_c(t)=\frac{\partial x^*}{\partial u}(u(t))\dot u(t)$. 
Using norm submultiplicativity, the residual satisfies
\begin{equation*}
    r(t)\le 
    \left\|\frac{\partial x^*}{\partial u}(u(t))\right\|_R
    \|\dot u(t)\|_2
    \le H\epsilon .
\end{equation*}
Solving the scalar comparison dynamics $\dot{\rho}=-c\rho+r(t)$ gives
\begin{equation} \label{eq:convolu}
    \rho(t)=e^{-c(t-t_0)}\rho(t_0)
    +\int_{t_0}^{t}e^{-c(t-s)}r(s)\,ds .
\end{equation}
Setting $t_0=0$ and using $r(s)\le H\epsilon$ in~\eqref{eq:convolu} yields the claimed transient estimate, 
and the ultimate bound follows by letting $t\to\infty$.
\end{proof}

\vspace{-10pt}
\subsection{Proof of Corollary \ref{cor:fast_robustness}}
\label{app:proof_cor_fast}
\begin{proof}
For the constant reference $x_c(t)\equiv x^*(\bar u)$, we have 
$\dot x_c(t)=0$ and $\|f(x_c(t),\bar u)\|_R=0$. 
Thus, the residual defined in Theorem~\ref{thm:invariant_tube} simplifies to
\begin{equation*}
    r(t)=\|f(x_c(t),u(t))-\dot x_c(t)\|_R
    =\|f(x_c(t),u(t))\|_R .
\end{equation*}
By the uniform Lipschitz condition, 
$r(t)\le L_u\|u(t)-\bar u\|_2\le L_u\delta$.
Solving the comparison dynamics $\dot\rho=-c\rho+r(t)$ with this bound gives the stated transient and ultimate bounds.
\end{proof}


\bibliography{reference}

@article{Willems1974,
  author  = {J. L. Willems},
  title   = {A Partial Stability Approach to the Problem of Transient Power System Stability},
  journal = {International Journal of Control},
  year    = {1974},
  volume  = {19},
  number  = {1},
  pages   = {1-14},
  doi     = {10.1080/00207177408932606}
}

@book{Chiang2011,
  author    = {Hsiao-Dong Chiang},
  title     = {Direct Methods for Stability Analysis of Electric Power Systems: Theoretical Foundation, BCU Methodologies, and Applications},
  edition   = {2nd},
  publisher = {Wiley},
  address   = {Hoboken, NJ, USA},
  year      = {2011}
}

@misc{lin2020roadmap,
  author       = {Lin, Yashen and others},
  title        = {Research Roadmap on Grid-Forming Inverters},
  howpublished = {National Renewable Energy Laboratory (NREL), Technical Report NREL/TP-5D00-73476},
  year         = {2020}
}

@book{Kundur1994,
  author    = {Prabha Kundur},
  title     = {Power System Stability and Control},
  publisher = {McGraw-Hill},
  address   = {New York, NY, USA},
  year      = {1994}
}

@article{dorfler2012kron,
  title={Kron reduction of graphs with applications to electrical networks},
  author={D{\"o}rfler, Florian and Bullo, Francesco},
  journal={IEEE Transactions on Circuits and Systems I: Regular Papers},
  volume={60},
  number={1},
  pages={150-163},
  year={2012},
  publisher={IEEE}
}

@ARTICLE{yang2024augment,
  author={Yang, Peng and Liu, Feng and Liu, Tao and Hill, David J.},
  journal={IEEE Transactions on Automatic Control}, 
  title={Augmented Synchronization of Power Systems}, 
  year={2024},
  volume={69},
  number={6},
  pages={3673-3688},
  publisher={IEEE}
  }

@INPROCEEDINGS{wang2023droop,
  author={Wang, Zhenhua and Li, Hepeng and He, Haibo and Sun, Yan},
  booktitle={2023 IEEE Power \& Energy Society Innovative Smart Grid Technologies Conference (ISGT)}, 
  title={Robust Distributed Finite-time Secondary Frequency Control of Islanded AC Microgrids With Event-Triggered Mechanism}, 
  year={2023},
  volume={},
  number={},
  pages={1-5},
  publisher={IEEE}
  }

@article{Schiffer2014DroopStability,
  author  = {Johannes Schiffer and Romeo Ortega and Alessandro Astolfi and J{\"o}rg Raisch and Tevfik Sezi},
  title   = {Conditions for stability of droop-controlled inverter-based microgrids},
  journal = {Automatica},
  volume  = {50},
  number  = {10},
  pages   = {2457-2469},
  year    = {2014},
  month   = {10},
  publisher={Elsevier}
}

@INPROCEEDINGS{Ru2024Slowly,
  author={Ru, Xi and Yang, Peng and Liu, Ruitong and Jia, Chen and Duan, Fangwei and Liu, Feng},
  booktitle={2024 IEEE 7th Student Conference on Electric Machines and Systems (SCEMS)}, 
  title={Analysis of Power System Stability Under Slowly Varying Perturbations in Renewable Generation}, 
  year={2024},
  volume={},
  number={},
  pages={1-7},
  publisher={IEEE}}

@article{LOHMILLER1998,
title = {On Contraction Analysis for Non-linear Systems},
journal = {Automatica},
volume = {34},
number = {6},
pages = {683-696},
year = {1998},
issn = {0005-1098},
publisher={Elsevier},
author = {WINFRIED LOHMILLER and JEAN-JACQUES E. SLOTINE}
}

@ARTICLE{Jafarpour2022,
  author={Jafarpour, Saber and Cisneros-Velarde, Pedro and Bullo, Francesco},
  journal={IEEE Transactions on Automatic Control}, 
  title={Weak and Semi-Contraction for Network Systems and Diffusively Coupled Oscillators}, 
  year={2022},
  volume={67},
  number={3},
  pages={1285-1300},
  doi={10.1109/TAC.2021.3073096}}

@Book{FBbook,
  author =    {F. Bullo},
  title =     {Contraction Theory for Dynamical Systems},
  year =      2026,
  edition =   {{1.3}},
  publisher = {Kindle Direct Publishing},
  ISBN =      {979-8836646806},
  url =       {https://fbullo.github.io/ctds},
}

@book{horn2012matrix,
  title={Matrix Analysis},
  author={Horn, Roger A and Johnson, Charles R},
  edition={2},
  year={2012},
  publisher={Cambridge University Press}
}

@ARTICLE{Mohandes2019review,
  author={Mohandes, Baraa and Moursi, Mohamed Shawky El and Hatziargyriou, Nikos and Khatib, Sameh El},
  journal={IEEE Transactions on Power Systems}, 
  title={A Review of Power System Flexibility With High Penetration of Renewables}, 
  year={2019},
  volume={34},
  number={4},
  pages={3140-3155},
  publisher={IEEE}
  }

@INPROCEEDINGS{Milano2018,
  author={Milano, Federico and Dörfler, Florian and Hug, Gabriela and Hill, David J. and Verbič, Gregor},
  booktitle={2018 Power Systems Computation Conference (PSCC)}, 
  title={Foundations and Challenges of Low-Inertia Systems (Invited Paper)}, 
  year={2018},
  volume={},
  number={},
  pages={1-25},
  doi={10.23919/PSCC.2018.8450880}}

@article{Anvari_2016,
    doi = {10.1088/1367-2630/18/6/063027},
    year = {2016},
    month = {jun},
    publisher = {IOP Publishing},
    volume = {18},
    number = {6},
    pages = {063027},
    author = {Anvari, M and Lohmann, G and Wächter, M and Milan, P and Lorenz, E and Heinemann, D and Tabar, M Reza Rahimi and Peinke, Joachim},
    title = {Short term fluctuations of wind and solar power systems},
    journal = {New Journal of Physics}
}

@INPROCEEDINGS{Hill2006,
  author={Hill, D.J. and Guanrong Chen},
  booktitle={2006 IEEE International Symposium on Circuits and Systems (ISCAS)}, 
  title={Power systems as dynamic networks}, 
  year={2006},
  volume={},
  number={},
  pages={4 pp.-725},
  doi={10.1109/ISCAS.2006.1692687}}

@article{Florian2012,
author = {D\"{o}rfler, Florian and Bullo, Francesco},
title = {Synchronization and Transient Stability in Power Networks and Nonuniform Kuramoto Oscillators},
journal = {SIAM Journal on Control and Optimization},
volume = {50},
number = {3},
pages = {1616-1642},
year = {2012},
doi = {10.1137/110851584}}

@book{bhatia2002stability,
  title={Stability theory of dynamical systems},
  author={Bhatia, Nam P and Szeg{\"o}, Giorgio P},
  year={2002},
  publisher={Springer-Verlag},
  address={Berlin, Heidelberg}
}

@article{vorotnikov2005partial,
  title={Partial stability and control: The state-of-the-art and development prospects},
  author={Vorotnikov, V. I.},
  journal={Automation and Remote Control},
  volume={66},
  pages={511-561},
  year={2005},
  doi={10.1007/s10513-005-0099-9}
}

@article{dorfler2013synchronization,
  title={Synchronization in complex oscillator networks and smart grids},
  author={D{\"o}rfler, Florian and Chertkov, Michael and Bullo, Francesco},
  journal={Proceedings of the National Academy of Sciences},
  volume={110},
  number={6},
  pages={2005-2010},
  year={2013},
  publisher={National Acad Sciences}
}

@article{simpson2013synchronization,
  title={Synchronization and power sharing for droop-controlled inverters in islanded microgrids},
  author={Simpson-Porco, John W and D{\"o}rfler, Florian and Bullo, Francesco},
  journal={Automatica},
  volume={49},
  number={9},
  pages={2603--2611},
  year={2013},
  publisher={Elsevier}
}

@article{Zhu2018,
author = {Zhu, Lijun and Hill, David J.},
title = {Stability Analysis of Power Systems: A Network Synchronization Perspective},
journal = {SIAM Journal on Control and Optimization},
volume = {56},
number = {3},
pages = {1640-1664},
year = {2018},
doi = {10.1137/17M1118646}}

@article{2020microgrid,
  title={Microgrid stability definitions, analysis, and examples},
  author={Farrokhabadi, Mostafa and Ca{\~n}izares, Claudio A and Simpson-Porco, John W and Nasr, Ehsan and Fan, Lingling and Mendoza-Araya, Patricio A and others},
  journal={IEEE Transactions on Power Systems},
  volume={35},
  number={1},
  pages={13-29},
  year={2020},
  publisher={IEEE}
}

@ARTICLE{Vorobev2018,
  author={Vorobev, Petr and Huang, Po-Hsu and Al Hosani, Mohamed and Kirtley, James L. and Turitsyn, Konstantin},
  journal={IEEE Transactions on Power Systems}, 
  title={High-Fidelity Model Order Reduction for Microgrids Stability Assessment}, 
  year={2018},
  volume={33},
  number={1},
  pages={874-887},
  doi={10.1109/TPWRS.2017.2707400}}

@book{khalil2002nonlinear,
  title={Nonlinear Systems},
  author={Khalil, Hassan K},
  edition={3rd},
  year={2002},
  publisher={Prentice Hall},
  address={Upper Saddle River, NJ}
}

@book{munkres2000topology,
  title={Topology},
  author={Munkres, James R.},
  year={2000},
  publisher={Prentice Hall}
}

@ARTICLE{He2021,
  author={He, Xiuqiang and Geng, Hua},
  journal={IEEE Transactions on Power Systems}, 
  title={Transient Stability of Power Systems Integrated With Inverter-Based Generation}, 
  year={2021},
  volume={36},
  number={1},
  pages={553-556}
}
\bibliographystyle{IEEEtran}
\vfill

\vspace{11pt}
\end{document}